\documentclass{article}
\usepackage[utf8]{inputenc}
\usepackage{amsmath}
\usepackage{amsfonts}
\usepackage{amssymb}
\usepackage{mathtools}
\usepackage{xcolor,colortbl}
\usepackage{adjustbox}
\usepackage{amsthm}
\usepackage[toc,page]{appendix}
\usepackage{geometry}
\usepackage{bbm}
\geometry{margin=1.5in}

\newtheorem{theorem}{Theorem}
\newtheorem{corollary}{Corollary}
\newtheorem{lemma}{Lemma}

\bibliographystyle{unsrt}

\title{Equilibria and Group Welfare in Vote Trading Systems}
\author{Matthew I. Jones}
\date{}

% preprint version

\begin{document}

\maketitle

\section*{Abstract}
We introduce a new framework to study the group dynamics and game-theoretic considerations when voters in a committee are allowed to trade votes. This model represents a significant step forward by considering vote-for-vote trades in a low-information environment where voters do not know the preferences of their trading partners. All voters draw their preference intensities on two issues from a common probability distribution and then consider offering to trade with an anonymous partner. The result is a strategic game between two voters that can be studied analytically. We compute the Nash equilibria for this game and derive several interesting results involving symmetry, group heterogeneity, and more. This framework allows us to determine that trades are typically detrimental to the welfare of the group as a whole, but there are exceptions. We also expand our model to allow all voters to trade votes and derive approximate results for this more general scenario. Finally, we emulate vote trading in real groups by forming simulated committees using real voter preference intensity data and computing the resulting equilibria and associated welfare gains or losses.

\section{Introduction}

For good reason, majority rule is the most widely-used decision rule for groups to consolidate members' preferences into a single selection, particularly for binary choices. This process is anonymous, decisive, and neutral~\cite{May_1952},
but it does not take into account voters' preference intensities; two voters that care deeply may be outvoted by three relatively ambivalent voters. While some may argue that this is by design~\cite{Bouton_Conconi_Pino_Zanardi_2021}, others conclude that sufficiently motivated minorities should be allowed to exert an oversized influence on certain issues~\cite{Jacobs_Christensen_Prislin_2009, Casella_2011}. Unfortunately, extracting preference intensities from voters is not straightforward, since they are incentivized to inflate the intensity of all their beliefs and claim that all issues are of paramount importance. When the group is deciding on multiple issues, one potential remedy for this shortcoming of majority rule is to allow voters to trade votes across issues, accumulating extra votes on their most valued issues and giving up their autonomy on issues they view as unimportant.

The study of vote trading has a long and complicated history that examines many different forms of voting and exchanging of votes to account for preference intensity (see \cite{Casella_Macé_2021} for a recent review). The central question, approached from many angles but not completely resolved, is ``Does the trading of votes improve outcomes for the entire group?'' The trading of votes-for-votes in a majority rule system introduces several complications that make votes unlike traditional goods in a market and require new analyses to make meaningful statements about the value of any particular trade.

Early theorists intuited that vote trading could address two issues in social choice: majority cycles and failure to respond to preference intensity~\cite{Buchanan_Tulock_1962, Coleman_1966, Mueller_1967}, but both claims are dubious at best. The tendency to remove majority cycles was quickly refuted by Park~\cite{Park_1967} and others~\cite{Miller_1977}. It took only slightly longer to show that although it is straightforward to create situations where the vote trading adds value to the group, it is equally simple to create scenarios where vote trading reduces value or even leads to Pareto inferior outcomes~\cite{Riker_Brams_1973, Ferejohn_1974}.

After an initial flurry of activity in the 1970s, the study of vote trading slowed down and diversified, studying various properties of vote trading and different mechanisms by which votes can be traded. One branch looks at vote trading as a dynamical system, studying stability and the convergence of trading to Condorcet winners~\cite{Casella_Palfrey_2019, Casella_Palfrey_2021}. There are also other ways in which votes can be exchanged besides the classic votes-for-votes framework. One line of research has examined ``implicit'' vote trading by combining multiple issues into a single bundle~\cite{Jones_Chervenak_Christakis_2023, Câmara_Eguia_2017}. Another way to trade votes is to exchange them for a numeraire, an alternative currency, that has value for voters and can be used to buy and sell votes~\cite{Casella_Llorente-Saguer_Palfrey_2012, Xefteris_Ziros_2017, Lalley_Weyl_2018}. Finally, a voter can trade votes with themselves by shifting votes from one issue to another~\cite{Casella_2011, Jackson_Sonnenschein_2007}. Each trading framework differs from vote-for-vote trading in important ways, and each requires its own analysis. 

This paper returns to the classic vote-for-vote model and attempts to study its effect on group welfare in a probabilistic framework. Samsonov, Sol\'{e}-Oll\'{e}, and Xefteris also recently looked at trading votes for votes, but in a system where vote count determines intensity of the reform, removing the payoff discontinuity at the pivotal vote and transforming the voting system back into a traditional market~\cite{Samsonov_Xefteris_SoleOlle_2023}.

In addressing the question of vote trading's impact on group welfare, one must first define group welfare. Early work, driven by economists, tended to focus on Pareto efficiency, with Ferejohn~\cite{Ferejohn_1974} going so far as to say ``Of course Mueller may be using some other notion of welfare [other than Pareto improvements] in which case it is not possible to decide the question of whether vote trading can produce welfare gains.'' 
Like many other papers to study vote trading, this paper uses a utility model where each voter gains utility when issues are accepted or rejected by the group~\cite{Riker_Brams_1973, Casella_Palfrey_2019}
and we define the welfare of the group to be the sum of the voters' final utilities. The benefit of this analysis is that the value of a hypothetical passionate minority is quantified and directly measured against the indifferent majority and that the value of the group decision on each issue can be examined in isolation, unlike traditional Pareto efficiency and Condorcet approaches.

Our model incorporates aspects from other papers as well. Like Xefteris and Ziros~\cite{Xefteris_Ziros_2017}, our voters have incomplete information, forcing them to make decisions in an ambiguous environment. Much of the literature around vote trading assumes that voters have complete information about the preferences (and preference intensities) of the entire voting population. This severely restricts the space of rational trades to only pivotal votes, where the trade actually changes the outcome of the vote. This is an assumption worth examining for two reasons. First, even in the most public and high-profile spaces, voters' preferences can be kept secret~\cite{Ramzy_2017, McPherson_2019}. Second, in a system where voters are required to state their preferences, they will be incentivized to lie about their preferences to make themselves more appealing as trading partners. Since preferences are often unknown (and when they are known, they may be falsified), we assume that voters have no beliefs about the preferences of others. All voters operate with the same knowledge about the distribution from which voter utilities are sampled but have no information about the preferences of their partners. In this paper, we analyze a model of vote trading that represents a critical divergence from the majority of vote trading work. Without complete information, voters must be concerned not just with the value of the issues they are trading on, but also if their trade will make any difference at all!  

In the main model, our voters are short-sighted, making myopic decisions as if no other voters will make trades~\cite{Casella_Palfrey_2018}. We loosen this assumption in Section \ref{sec:group_trades} and approximate the effect of randomly pairing voters and allowing them all to trade. Myopic trading also means that voters make sincere trades, working to earn votes that they think are valuable, instead of acquiring votes only to trade them away later~\cite{Iaryczower_Oliveros_2016}.

As the toy model presented in this paper shows, a voter's decision to offer a potential trade is dependent on her assessment of what her trading partner's preferences will be. Since partner behavior is critical in determining one's own behavior, we study this system via Nash equilibria in which no participants can improve their payoff by changing strategy~\cite{Holt_Roth_2004}. In this paper, we show how to find these equilibria and then prove several interesting properties of these vote trading systems. We also compute the value of vote trading for the group, specifically we find the probability that a trade has positive expected value for the net utility of the group, which can range from 0 to 1 depending on the underlying utility distribution. 

Finally, we use this new tool on real voter preference intensity data~\cite{ANES_2020}. Gathering empirical vote trading data is nearly impossible and most empirical vote trading studies are conducted in the laboratory with artificially-assigned utilities~\cite{Casella_Palfrey_2021, Tsakas_Xefteris_Ziros_2020, Goeree_Zhang_2017}. This analysis illustrates how this analysis of vote trading equilibria can be utilized alongside empirical data. In this example, we study the effect of vote trading in a hypothetical committee made up of real voters deciding on real issues, but this can be easily repeated on data for real decision-makers, assuming accurate information about such preferences could be collected.

This work contributes to the study of vote trading in several ways. First, it brings new mathematical techniques to bear on an old problem. This new perspective showcases the importance of the utility distribution of voters in determining the value of vote trading to both the voters and the entire group. Second, it provides a way of directly computing the probability that vote trading adds (or subtracts) value. While distributions exist where trading is extremely valuable, in most scenarios, vote trading removes value from the group by overriding simple majority rule. And third, it demonstrates a new way to study vote trading, by taking real voter preference data and predicting what trades would be offered by rational voters.

\section{A Model of Vote Trading}

We begin with a group of $n$ voters $v_i$, where $n$ is odd to avoid ties. Unless otherwise noted, we use $n=11$ in our examples. Since we consider one-for-one trades of votes, only the two issues are ever relevant when deciding to offer a trade and additional issues do not appear in our analysis, so for all intents and purposes, we only need to work with two issues, $t_1$ and $t_2$.

The vote trade game takes place in three stages. First, voters are assigned utilities on both issues which can take any value between -1 and 1, creating two vectors $T_j = \{u_{i,j} \in [-1,1] | i = 1 \dots n \}$ for $j = 1, 2$. Many, if not most, issues in real life are related, and there can be strong correlations between issues. We take this into account by drawing utilities $u_{i,1}, u_{i,2}$ jointly according to the probability distribution $f:[-1,1]^2 \to \mathbb R$ which can take any form desired. This function is known to all voters and is the factor that determines the equilibrium state(s). We assume throughout that $f$ is continuous and positive almost everywhere. We can integrate this function over the four half-squares to get the probabilities that a random voter has positive or negative value on issues $t_1$ and $t_2$, denoted $Q_1^+$, $Q_1^-$, $Q_2^+$, and $Q_2^-$.

In the second stage, two voters can offer to trade away their vote on $t_1$ for an additional vote on $t_2$, trade away $t_2$ for $t_1$, both, or neither. In the rare case that both voters offer to trade for either issue, the direction of the trade is chosen randomly.

Finally, all voters cast their votes on both issues. Voters are sincere, voting for an issue if $u_{i,j}>0$ and against it otherwise. However, if $v_1$ gave away their vote on $t_1$ to $v_2$, then $v_1$ votes for $t_1$ if $u_{2,1}>0$, and similarly, $v_2$ votes for $t_2$ if $u_{1,2}>0$. Let $\mathbb{I}_{t_j} = \begin{cases}
    1 & t_j \text{ passes} \\
    -1 & t_j \text{ fails}
\end{cases}$ represent the passage or failure of each issue. Then the final utility for each $v_i$ is $\mathbb{I}_{t_1} u_{i,1} + \mathbb{I}_{t_2} u_{i,2}$. When considering group welfare, we simply sum over all voters, which can be expressed in terms of the dot product as $\mathbb{I}_{t_1} \mathbbm{1} \cdot T_1 + \mathbb{I}_{t_2} \mathbbm{1} \cdot T_2$.

A voter's strategy is a decision of when to offer to trade away $t_1$ and/or $t_2$, and can be represented by two subsets of $[-1,1]$ as shown in Figure \ref{fig:dependent_both_players}. The calculation is different depending on the sign of $u_{i,1}$ and $u_{i,2}$, so we break these two subsets into the eight regions $R_1$ through $R_8$ which are determined by the angles $\theta_1$ through $\theta_8$. To get probabilities, we integrate $f$ over all these regions, so $I_i = \iint_{R_i} f(x,y) dx dy$.

\subsection{Nash Equilibria in Voting Systems}

Note that voters do not need to have opposing preferences on the issues when trading, so trades may not actually influence the final votes. Suppose we have a 5-voter system like in Table~\ref{tab:simple_low_info}. $v_1$ has utility $u_1=1$ on issue $t_1$ and $u_2=0.5$ on $t_2$. What could happen if $v_1$ offers to trade away her vote to $v_2$ on $t_2$ for $v_2$'s vote on $t_1$?

\begin{table}[h]
    \centering
    \begin{tabular}{r||c|c|}
         &  $t_1$ & $t_2$\\
         \hline \hline
        $v_1$ & 1 & 0.5 \\
        \hline
        $v_2$ & \cellcolor{gray!15}$u_4$ & \cellcolor{gray!15}$u_3$ \\
        \hline
        $v_3$ & \cellcolor{gray!15}? & \cellcolor{gray!15}? \\
        \hline        
        $v_4$ & \cellcolor{gray!15}? & \cellcolor{gray!15}? \\
        \hline
        $v_5$ & \cellcolor{gray!15}? & \cellcolor{gray!15}? \\
        \hline
    \end{tabular}
    \caption{A demonstration of the low-information environment. $v_1$ has no knowledge of the relative utilities (and therefore the voting behavior) of the other voters in the group, indicated by the grey shading.}
    \label{tab:simple_low_info}
\end{table}

If $u_4 < 0$, $v_1$ has successfully gained an additional vote on $t_1$ that may be the swing vote to change the final outcome, but if $u_4 > 0$, there will be no change in voting behavior since $v_2$ already supported $t_1$. On the other side, if $u_3 > 0$, $v_1$ doesn't have to change her vote on $t_2$ and doesn't have to risk being the swing vote that changes the outcome on the less-valuable issue, like she does if $u_3 < 0$. Of course, even if voting behavior changes, $v_1$ and $v_2$ can still be outvoted by $v_3$, $v_4$ and $v_5$. There are four events that are relevant to the expected value for the trader:

$A$: $v_1$ and $v_2$ have opposite preferences on $t_1$

$B$: $v_1$ and $v_2$ have opposite preferences on $t_2$

$C$: $v_2$ is the swing vote on $t_1$

$D$: $v_1$ is the swing vote on $t_2$

Either $A$ and $C$ need to happen, in which case value $2 |u_1|$ is gained, or $B$ and $D$ need to happen, in which case $2 |u_2|$ value is lost. Therefore, the expected value can be expressed as follows:

\begin{equation}\label{eq:expectation}
E(u_1,u_2) = 2 |u_1| P(A)P(C|A) - 2 |u_2| P(B)P(D|B)
\end{equation}

The myopic assumption is critical when computing the probability of $C$ and $D$. Voters make trades as if no other trades will take place; this allows us to treat the votes of the other $n-2$ voters as independent events that are only dependent on the underlying distribution $f$. We discuss how to account for the probability that a trade is made when computing the probability of pivotality in Section \ref{sec:group_trades}. Of course, the probabilities of these events depends on the trades that are being offered, so we need to compute the value of each trade and determine which trades should be offered all at the same time.

\begin{figure}
   \centering
   \includegraphics[width = \textwidth]{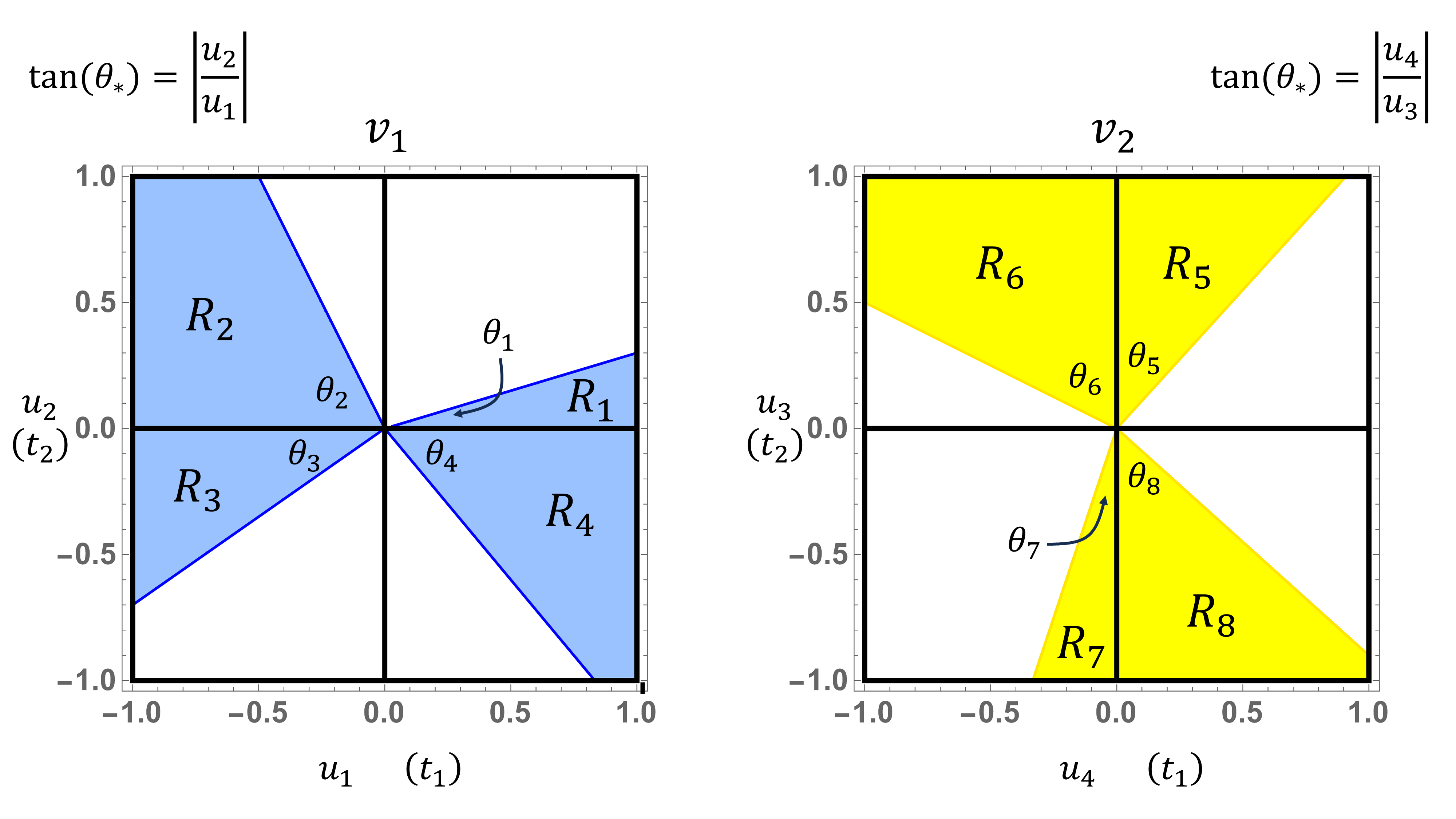}
   \caption{Trading regions for both players on two issues. For both plots, the utility on $t_1$ is represented by the x-axis, and the utility on $t_2$ is represented on the y-axis.  On the left, the trades that $v_1$ is offering highlighted in blue, while $v_2$'s trades are in yellow on the right. Each region $R_i$ is defined by the angle $\theta_i$ (alternatively, by the slope of the line). These angles will be adjusted to ensure that all trades in the blue or yellow regions have positive expected value and all trades in white have negative expected value}
   \label{fig:dependent_both_players}
\end{figure}

By finding the ratio of $\frac{u_2}{u_1}$ or $\frac{u_3}{u_4}$ where the value of the trade is zero (changing the slopes of the lines that bound the eight regions), we can separate the trades that have positive and negative value. We simultaneously solve this for all types of trades for $v_1$ and $v_2$ and the result is a Nash equilibrium.

\begin{theorem}\label{thm:ne}
Suppose $n$ voters are voting on issues $t_1$ and $t_2$ where utilities for the two issues are jointly distributed according to $f$. A set of trades being offered  (defined by the eight $\theta_i$ coefficients) is a non-trivial Nash equilibrium if and only if the $\theta_i$ satisfy the following equations:

\begin{equation}\label{eq:k1}
    \theta_1 = \arctan \left( \frac{I_6(\theta_6)+I_7(\theta_7)}{I_7(\theta_7)+I_8(\theta_8)}\frac{(Q_1^-)^\frac{n-1}{2} (Q_1^+)^\frac{n-3}{2}}{(Q_2^-)^\frac{n-3}{2} (Q_2^+)^\frac{n-1}{2}} \right)
\end{equation}

\begin{equation}\label{eq:k2}
    \theta_2 = \arctan \left( \frac{I_5(\theta_5)+I_8(\theta_8)}{I_7(\theta_7)+I_8(\theta_8)}\frac{(Q_1^-)^\frac{n-3}{2} (Q_1^+)^\frac{n-1}{2}}{(Q_2^-)^\frac{n-3}{2} (Q_2^+)^\frac{n-1}{2}} \right)
\end{equation}

\begin{equation}\label{eq:k3}
    \theta_3 = \arctan \left( \frac{I_5(\theta_5)+I_8(\theta_8)}{I_5(\theta_5)+I_6(\theta_6)} \frac{(Q_1^-)^\frac{n-3}{2} (Q_1^+)^\frac{n-1}{2}}{(Q_2^-)^\frac{n-1}{2} (Q_2^+)^\frac{n-3}{2}} \right)
\end{equation}

\begin{equation}\label{eq:k4}
    \theta_4 = \arctan \left( \frac{I_6(\theta_6)+I_7(\theta_7)}{I_5(\theta_5)+I_6(\theta_6)} \frac{(Q_1^-)^\frac{n-1}{2} (Q_1^+)^\frac{n-3}{2}}{(Q_2^-)^\frac{n-1}{2} (Q_2^+)^\frac{n-3}{2}} \right)
\end{equation}

\begin{equation}\label{eq:k5}
    \theta_5 = \arctan \left( \frac{I_3(\theta_3)+I_4(\theta_4)}{I_2(\theta_2)+I_3(\theta_3)} \frac{(Q_2^-)^\frac{n-1}{2} (Q_2^+)^\frac{n-3}{2}}{(Q_1^-)^\frac{n-3}{2} (Q_1^+)^\frac{n-1}{2}} \right)
\end{equation}

\begin{equation}\label{eq:k6}
    \theta_6 = \arctan \left( \frac{I_3(\theta_3)+I_4(\theta_4)}{I_1(\theta_1)+I_4(\theta_4)} \frac{(Q_2^-)^\frac{n-1}{2} (Q_2^+)^\frac{n-3}{2}}{(Q_1^-)^\frac{n-1}{2} (Q_1^+)^\frac{n-3}{2}} \right)
\end{equation}

\begin{equation}\label{eq:k7}
    \theta_7 = \arctan \left( \frac{I_1(\theta_1)+I_2(\theta_2)}{I_1(\theta_1)+I_4(\theta_4)} \frac{(Q_2^-)^\frac{n-3}{2} (Q_2^+)^\frac{n-1}{2}}{(Q_1^-)^\frac{n-1}{2} (Q_1^+)^\frac{n-3}{2}} \right)
\end{equation}

\begin{equation}\label{eq:k8}
    \theta_8 = \arctan \left( \frac{I_1(\theta_1)+I_2(\theta_2)}{I_2(\theta_2)+I_3(\theta_3)} \frac{(Q_2^-)^\frac{n-3}{2} (Q_2^+)^\frac{n-1}{2}}{(Q_1^-)^\frac{n-3}{2} (Q_1^+)^\frac{n-1}{2}} \right)
\end{equation}

\end{theorem}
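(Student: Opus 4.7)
The plan is to derive each of the eight equations as an indifference condition on the boundary of one of the eight ``offer trade'' regions $R_i$. I will carry out the derivation in detail for (\ref{eq:k1}) and note that the remaining seven follow by the same argument after relabelling quadrants and trade directions.

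First, I would fix $v_1$ in the quadrant corresponding to $\theta_1$. Reading off the exponents of $Q_1^\pm$ and $Q_2^\pm$ in (\ref{eq:k1}), this is the quadrant $u_{1,1}>0,\ u_{1,2}>0$, with $v_1$ considering the trade that hands off her $t_2$ vote in exchange for $v_2$'s $t_1$ vote. Using (\ref{eq:expectation}), I would expand $P(A)$ and $P(B)$ as joint probabilities: $v_2$ must both offer the matching trade and have the relevant sign of utility. Each such joint probability is an integral of $f$ over a specific subset of $v_2$'s strategy regions $R_5,\ldots,R_8$. The ``matching and opposite on $t_1$'' event is the union of those of $R_5,\ldots,R_8$ lying in the half-plane $u_{2,1}<0$, which by the labelling of Figure \ref{fig:dependent_both_players} contributes $I_6(\theta_6)+I_7(\theta_7)$. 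Similarly, ``matching and opposite on $t_2$'' contributes $I_7(\theta_7)+I_8(\theta_8)$.

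Next I would invoke the myopic assumption to compute the conditional swing probabilities $P(C\mid A)$ and $P(D\mid B)$. Since the remaining $n-2$ voters vote sincerely with preferences drawn i.i.d.\ from $f$, each swing event is binomial in the marginals. Because $v_1$ has $u_{1,1}>0$ and $v_2$ has $u_{2,1}<0$, a one-for-one trade shifts the yes count on $t_1$ by $+1$, so $v_2$ is the swing vote on $t_1$ (fail-to-pass) precisely when exactly $(n-3)/2$ of the other voters vote yes on $t_1$; this contributes $\binom{n-2}{(n-3)/2}(Q_1^+)^{(n-3)/2}(Q_1^-)^{(n-1)/2}$. Analogously, $v_1$ is the swing on $t_2$ (pass-to-fail) contributes $\binom{n-2}{(n-1)/2}(Q_2^+)^{(n-1)/2}(Q_2^-)^{(n-3)/2}$. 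The two binomial coefficients are equal and cancel. Setting the expression in (\ref{eq:expectation}) to zero at the boundary of the region and recognizing that this boundary is a line through the origin with $\tan\theta_1 = |u_2|/|u_1|$ yields (\ref{eq:k1}).

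For the other seven equations, the same derivation applies after relabelling. The pattern is uniform: the $I_j$ sums in numerator and denominator are always the measures of the partner's regions that are compatible with the offered trade and have the opposite sign on, respectively, the issue $v_1$ is receiving and the issue $v_1$ is giving up; and the powers of $Q_1^\pm, Q_2^\pm$ swap whenever the chosen quadrant reverses the direction of the swing on the corresponding issue. The biconditional is completed by the usual best-response argument: if all eight equations hold, then inside each $R_i$ the marginal expected value is a continuous affine function of $(|u_1|,|u_2|)$ that vanishes on the bounding line and has the correct sign in the interior, so no unilateral deviation is profitable; conversely, in any non-trivial equilibrium the boundary of each trading region must coincide with the zero-set of the expected value. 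The main obstacle will be the bookkeeping: identifying, for each of the eight $\theta_i$, exactly which of the partner's regions $R_j$ are ``compatible'' and which direction of swing applies. Once the quadrant signs are read off from the $Q^\pm$ exponents of each target equation and confirmed against Figure \ref{fig:dependent_both_players}, the remaining algebra is routine and structurally identical across the eight cases.
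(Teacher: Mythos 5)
Your proposal is correct and takes essentially the same route as the paper's own proof in Appendix~\ref{sec:NE_deriv}: a quadrant-by-quadrant indifference condition built from Equation~\eqref{eq:expectation}, with the partner's compatible-and-opposite regions giving $I_6+I_7$ and $I_7+I_8$, the myopic binomial pivotality terms $\binom{n-2}{\frac{n-3}{2}}(Q_1^-)^{\frac{n-1}{2}}(Q_1^+)^{\frac{n-3}{2}}$ (and its $t_2$ analogue), the zero set identified as the line $\tan\theta_1 = \left|\frac{u_2}{u_1}\right|$, and the same best-response argument for the biconditional. The only cosmetic difference is that you work with unnormalized joint probabilities where the paper conditions on the partner offering the matching trade (dividing by $I_{S2}$); this factor cancels in the ratio, so the resulting equations are identical.
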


\begin{proof}
See Appendix~\ref{sec:NE_deriv}.
\end{proof}

We describe a Nash equilibrium that satisfies Equations~\eqref{eq:k1}-\eqref{eq:k8} as non-trivial, in contrast to the trivial Nash equilibrium where no one offers any trades, and therefore no player can gain anything by offering to trade votes. This trivial equilibrium is not a strict Nash equilibrium, is present for all underlying joint utility distributions $f$, and is not very interesting from a voting perspective, so we restrict the rest of our discussion to non-trivial equilibria only.

Equations~\eqref{eq:k1}-\eqref{eq:k8} form a powerful tool to study equilibria in a vote trading system. For example, in other models, the existence of a stable outcome requires a great deal of effort~\cite{Casella_Palfrey_2019}, but here we get the existence of stability almost immediately.

\begin{corollary}\label{thm:existence}
    A non-trivial Nash equilibrium exists.
\end{corollary}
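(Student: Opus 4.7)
The natural approach is a fixed-point argument. I would view Equations~\eqref{eq:k1}--\eqref{eq:k8} as defining a self-map $F\colon [0,\pi/2]^8 \to [0,\pi/2]^8$ whose $i$th coordinate is the right-hand side of equation $(i)$. Since $\arctan$ of a non-negative argument always lies in $[0,\pi/2)$, $F$ clearly sends the cube into itself, and the cube is compact and convex, so Brouwer's fixed-point theorem should be the workhorse.

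First, I would verify continuity of $F$. The hypothesis that $f$ is continuous and positive almost everywhere makes each $I_j(\theta_j)$ a continuous, monotone function of $\theta_j$ with $I_j(0)=0$, and forces each $Q_1^{\pm}$, $Q_2^{\pm}$ to lie strictly in $(0,1)$. Consequently the argument of every $\arctan$ in Equations~\eqref{eq:k1}--\eqref{eq:k8} is a continuous function of $(\theta_1,\dots,\theta_8)$ on the cube, except possibly where a denominator vanishes. At such points, adopting the limit $\arctan(+\infty)=\pi/2$ whenever the matching numerator is positive yields a continuous extension. The only truly singular point is the origin, where every ratio is $0/0$. Applying Brouwer then produces a fixed point $\theta^*\in[0,\pi/2]^8$.

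The main obstacle is ruling out $\theta^*=0$: the origin corresponds exactly to the trivial ``nobody trades'' equilibrium that the statement excludes, and it is also the single point at which $F$ is only defined by convention. To show $\theta^*\neq 0$, I would check that $F(\theta)$ has at least one strictly positive coordinate for every $\theta\in[0,\pi/2]^8\setminus\{0\}$. The equations organise naturally into the four pairs of denominators $I_1+I_4$, $I_2+I_3$, $I_5+I_6$, $I_7+I_8$, and inspection shows that whichever $\theta_j$ is non-zero forces at least one numerator on the opposite side of this bipartite structure to be non-zero, so some $F_i(\theta)>0$. Upgrading this to a \emph{uniform} lower bound---using compactness and the fact that the ratios $I_{j_1}/I_{j_2}$ have well-defined non-zero limits along rays into the origin when $f$ is continuous and positive---then allows one to replace $[0,\pi/2]^8$ by a slightly retracted compact convex set on which Brouwer produces a fixed point strictly in the interior.

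I expect the fixed-point step itself to be essentially automatic once continuity is in hand; the delicate part is the careful analysis of $F$ near the origin so that a Brouwer-type argument can be invoked without collapsing onto the trivial equilibrium. A clean alternative, if the quantitative bound proves cumbersome, is to apply the Kakutani--Glicksberg--Fan theorem to the upper-hemicontinuous correspondence that at the origin returns the set of directional limits of $F$; since none of these limits equal $0$, any fixed point of the correspondence is automatically a non-trivial equilibrium in the sense of the statement.
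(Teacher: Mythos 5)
Your proposal follows essentially the same route as the paper: Appendix~\ref{sec:exist_deriv} likewise treats Equations~\eqref{eq:k1}--\eqref{eq:k8} as a best-response map with the $\arctan(+\infty)=\frac{\pi}{2}$ convention and applies Brouwer's Fixed Point Theorem on a compact convex subset of $[0,\frac{\pi}{2}]^8$ bounded away from the trivial equilibrium at the origin. The only substantive difference is how the retraction is justified: where you appeal to compactness and ray limits of the ratios $I_{j_1}/I_{j_2}$ near the origin (a step that is delicate, since the punctured cube is not compact and those limits can degenerate if $f$ vanishes along an axis), the paper obtains an explicit uniform bound $\theta_{min}=\arctan(Q_{min})$ by a clean dichotomy --- in each equation either the $I$-ratio is at least $1$, which bounds that angle below by $\theta_{min}$, or the reciprocal comparison forces the paired angle in a dual equation above $\theta_{min}$ --- so that the invariant region $R$ (defined by $\theta_1+\theta_3\geq\theta_{min}$, etc.) requires no limiting argument at all.
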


\begin{proof}
See Appendix~\ref{sec:exist_deriv}, in which we mimic the standard proof for the existence of a mixed Nash equilibrium~\cite{Nash_1951} using Brouwer's Fixed Point Theorem~\cite{Smart_1980} while bounding away from the origin and avoiding the trivial Nash equilibrium.
\end{proof}

\begin{figure}
    \centering
    \includegraphics[width=\textwidth]{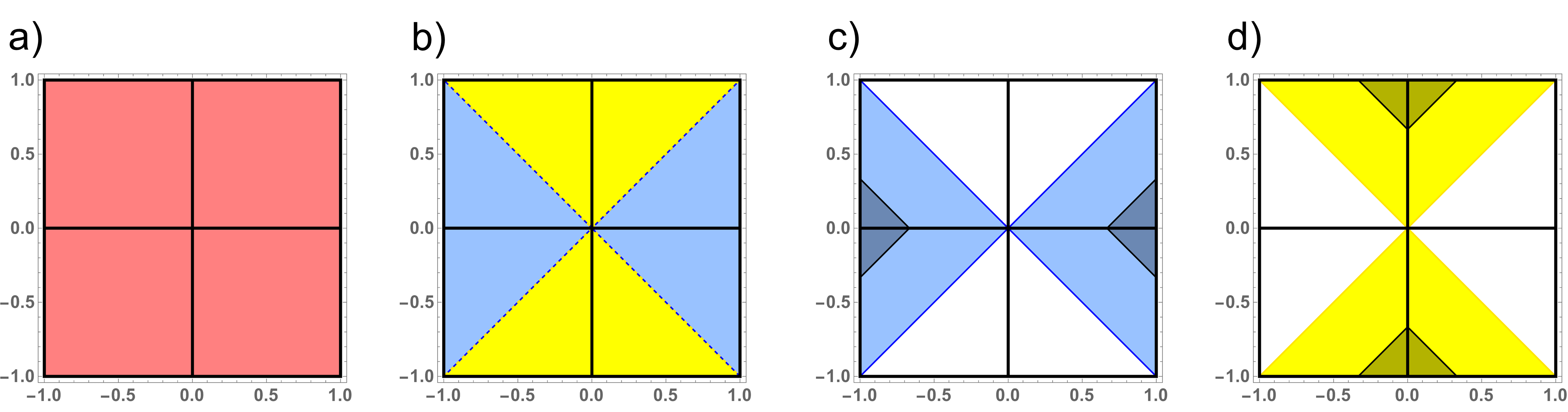}
    \caption{The results of our vote trading analysis on the uniform distribution. (a) shows a heatmap of the underlying distribution, in this case a constant $f(x,y) = 1/4$. (b) shows an equilibrium where the $\theta_i$ satisfy Equations~\eqref{eq:k1}-\eqref{eq:k8}. We call this particular equilibrium, where all $\theta_i = \frac{\pi}{4}$, the naive Nash equilibrium. (c) and (d) show the trades that improve group welfare in black alongside the trades that are offered by $v_1$ and $v_2$, respectively.}
    \label{fig:uniform_dist}
\end{figure}

Theorem~\ref{thm:ne} can also be used to study specific distributions of utilities. For example, in Figure~\ref{fig:uniform_dist}, we consider the uniform distribution $f(x,y)= 1/4$. Plugging in $f$ and solving all eight equations yields the solution $\theta_i = \frac{\pi}{4}$, illustrated in Figure~\ref{fig:uniform_dist}b. We refer to this equilibrium as the ``naive equilibrium'' in which all voters should offer to give up their vote on an issue that matters less in exchange for another vote on an issue that matters more. This leads us to a powerful symmetry result.

\begin{corollary}
    Suppose $f$ is point symmetric around the origin, meaning $f(x,y)=f(-x,-y)$. Then the naive state is, in fact, a Nash equilibrium.
\end{corollary}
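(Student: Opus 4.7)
The plan is to verify the naive state $\theta_i = \pi/4$ (for all $i \in \{1, \ldots, 8\}$) as a solution of the fixed-point system in Theorem~\ref{thm:ne} by showing that, under point symmetry, both the $Q$-factors and the $I$-factors on the right-hand sides of \eqref{eq:k1}--\eqref{eq:k8} collapse to $1$, leaving $\arctan(1) = \pi/4$.

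First I would handle the $Q$-factors. The involution $\phi(x,y) = (-x,-y)$ preserves Lebesgue measure and, by the point-symmetry hypothesis, satisfies $f \circ \phi = f$. Applying $\phi$ to the integral defining $Q_1^+$ identifies it with $Q_1^-$, and analogously $Q_2^+ = Q_2^-$; combined with $Q_j^+ + Q_j^- = 1$, this forces all four values to equal $1/2$. Each $Q$-factor in \eqref{eq:k1}--\eqref{eq:k8} is a ratio of products whose total exponent is $(n-1)/2 + (n-3)/2 = n - 2$ in both the numerator and the denominator, so every such $Q$-factor equals $(1/2)^{n-2}/(1/2)^{n-2} = 1$.

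Next I would show that $I_1 = I_3$, $I_2 = I_4$, $I_5 = I_7$, and $I_6 = I_8$ in the naive state. When $\theta_i = \pi/4$ uniformly, each region $R_i$ is a wedge confined to a single quadrant of the utility plane, and the labeling in Figure~\ref{fig:dependent_both_players} groups the four regions of each player so that partnered indices lie in diagonally opposite quadrants (Q1 opposite Q3, and Q2 opposite Q4). Because the angular conditions defining a naive wedge are preserved under a sign flip of both coordinates, $\phi$ sends each region onto its partner exactly, and then the invariance $f \circ \phi = f$ yields the four desired equalities by change of variables. With these pairings in hand, each equation admits a one-line verification: for \eqref{eq:k1}, the $I$-ratio $(I_6 + I_7)/(I_7 + I_8)$ becomes $(I_8 + I_7)/(I_7 + I_8) = 1$ using $I_6 = I_8$, so $\theta_1 = \arctan(1 \cdot 1) = \pi/4$, and each of the other seven equations reduces identically via exactly one of the four swaps.

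The main obstacle is the geometric content of the middle step: one needs to check carefully, from the wedge construction in Figure~\ref{fig:dependent_both_players}, that diagonally opposite-quadrant regions really are exchanged by $\phi$ when all $\theta_i$ equal $\pi/4$. Once this pairing is secured, the collapse of the $Q$-factors and the algebraic reductions of the eight fixed-point equations are routine, and no delicate estimation is needed.
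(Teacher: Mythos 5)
Your proposal is correct and follows essentially the same route as the paper's proof: verify that $\theta_i = \pi/4$ satisfies Equations~\eqref{eq:k1}--\eqref{eq:k8} by using point symmetry to obtain $I_1 = I_3$, $I_2 = I_4$, $I_5 = I_7$, $I_6 = I_8$ and $Q_1^+ = Q_1^- = Q_2^+ = Q_2^- = 1/2$, after which each $\arctan$ argument collapses to $1$. You simply make explicit the change-of-variables and region-pairing details that the paper leaves as ``straightforward,'' which is a fair and complete elaboration rather than a different argument.
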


\begin{proof}
    Suppose $\theta_i = \frac{\pi}{4}$ for all $i$. Since $f$ is point symmetric around the origin, $I_1(\frac{\pi}{4})=I_3(\frac{\pi}{4})$, $I_2(\frac{\pi}{4})=I_4(\frac{\pi}{4})$, $I_5(\frac{\pi}{4})=I_7(\frac{\pi}{4})$, and $I_6(\frac{\pi}{4})=I_8(\frac{\pi}{4})$. These equations make it straightforward to show that $Q_1^+ = Q_1^- = Q_2^+ = Q_2^-$ and then Equations~\eqref{eq:k1}-\eqref{eq:k8} follow easily.
\end{proof}

Even if $f$ is not point symmetric, we can still compute the Nash equilibrium, numerically if necessary. Suppose the two issues have utilities distributed according to the following joint distribution function:

\begin{equation}\label{eq:pw_const_dist}
    f(x,y) = \begin{cases}
        1/10 & x,y>0 \\
        2/10 & x>0, y<0 \\
        3/10 & x,y < 0 \\
        4/10 & x<0, y>0
    \end{cases}
\end{equation}

The equilibrium for this distribution is shown in Figure \ref{fig:pw_const_dist}. This case differs from the naive equilibrium because of the white and green regions in Figure \ref{fig:pw_const_dist}b. In the white regions, it is not worth trading either issue for the other, even though $t_1$ is more valuable than $t_2$. In the green regions, it is worth trading either issue for the other. 

\begin{figure}
   \centering
   \includegraphics[width = \textwidth]{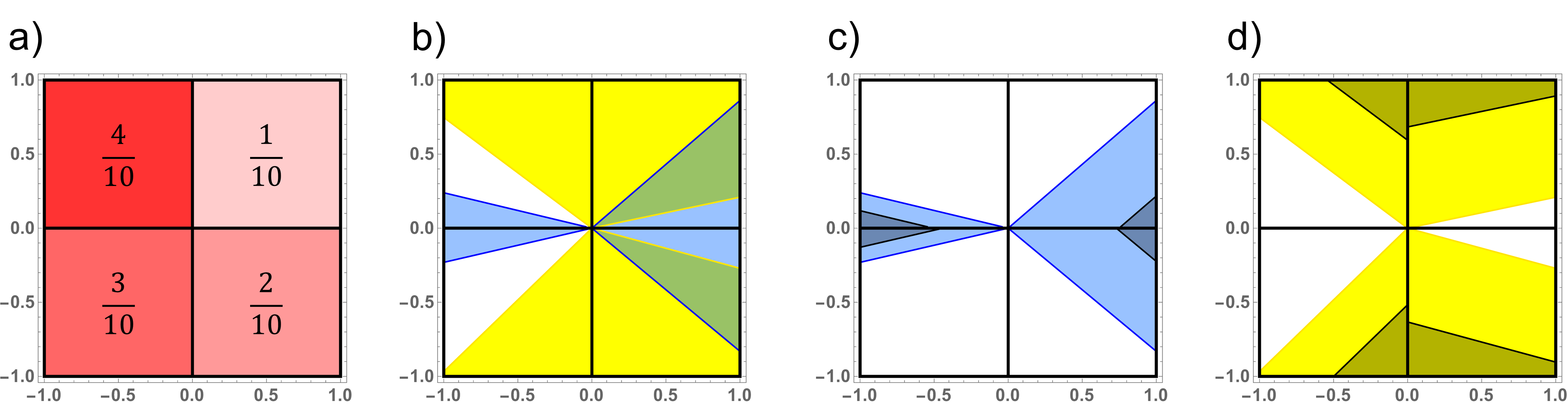}
   \caption{The results of our vote trading analysis when issues are dependent. (a) shows a heatmap of the underlying distribution, Equation \eqref{eq:pw_const_dist}. (b) shows the equilibrium. Trades in the green regions are profitable for $v_1$ and $v_2$, while positions in the white regions trades are not worth trading for either player. (c) and (d) show the trades that improve group welfare in black alongside the trades that are offered by $v_1$ and $v_2$, respectively.}
   \label{fig:pw_const_dist}
\end{figure}

For example, suppose a player has utility 0.8 on $t_1$ and -0.1 on $t_2$. Because $t_1$ is so much more important, this player would be willing to play the role of Player 1, giving away their vote on $t_2$ for an additional vote on $t_1$. This player should not consider giving away the vote on $t_1$ for a vote on $t_2$. This is what we would naively expect. 

Now suppose that the player has utility 0.8 and 0.4 on $t_1$ and $t_2$, respectively. It turns out that this player can expect a positive gain from trading away their vote on $t_1$ or $t_2$! Conversely, suppose the utilities are -0.9 and 0.4. This player would not expect to gain value by giving away either vote, and is content to have one vote on each issue. 

We present one final claim regarding equilibria of vote trading systems: 

\begin{corollary}
    There exist probability distributions $f$ with multiple solutions to Equations~\eqref{eq:k1}-\eqref{eq:k8} and therefore have multiple Nash equilibria.
\end{corollary}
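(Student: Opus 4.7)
The plan is a direct construction: exhibit one probability distribution $f$ and two distinct solutions to Equations~\eqref{eq:k1}-\eqref{eq:k8}. Given the transcendental nature of the system, a fully analytic enumeration is intractable, so I would combine a judicious choice of $f$ with a local Brouwer-style certification to upgrade numerical evidence into a genuine proof.

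First, I would restrict to piecewise constant densities of the form in Equation~\eqref{eq:pw_const_dist}, since then each integral $I_i(\theta_i)$ reduces to an elementary trigonometric expression and the $Q_j^\pm$ become constants. This makes the right-hand sides of \eqref{eq:k1}-\eqref{eq:k8} analytically explicit and allows me to scan the three-parameter family of quadrant masses. I would iterate the fixed-point map $\Phi(\theta)$ obtained by collecting the right-hand sides of \eqref{eq:k1}-\eqref{eq:k8} from many random initial conditions in $(0, \pi/2)^8$ and look for distributions under which different seeds converge to different limits. A strong asymmetry in the density, with mass concentrated in one or two quadrants, tends to create qualitatively different ``specializations'' of the trading pattern, and so is a natural place to look for multiple fixed points.

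Second, once a pair of candidate equilibria $\theta^{*}$ and $\theta^{**}$ is located, I would certify each one by constructing small disjoint closed boxes $B^{*}, B^{**} \subset (0,\pi/2)^8$ with $\Phi(B^{*}) \subseteq B^{*}$ and $\Phi(B^{**}) \subseteq B^{**}$. Continuity of $\Phi$ together with explicit bounds on the partial derivatives of the $I_i$ integrals reduces this to a routine, if tedious, interval calculation. Brouwer's theorem, already invoked in Corollary~\ref{thm:existence}, then yields an actual fixed point of $\Phi$ in each box, and disjointness guarantees the two resulting equilibria are genuinely distinct.

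The main obstacle is the search itself: for many distributions every seed flows to the same fixed point, so I expect most of the effort to go into tuning $f$ so that the Jacobian of $\Phi$ at a natural symmetric equilibrium has an eigenvalue crossing one, at which point a branch of new equilibria emerges from a bifurcation. Tracking this bifurcation numerically provides both an explicit witness distribution and an \emph{a priori} separation between the two equilibria, which in turn makes the Brouwer certification in the previous step straightforward to carry out.
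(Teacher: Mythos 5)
Your two-stage plan (numerical search, then rigorous certification) could in principle produce a valid proof, but both stages have concrete holes. First, the search family is almost certainly barren. With $f$ constant on each quadrant, each $I_i(\theta_i)$ is an explicit trigonometric function of a single angle scaled by one of only four density values, so the fixed-point map $\Phi$ is rigidly determined by three mass ratios, and the masses on the two sides of any candidate boundary line cannot be tuned independently of the masses relative to a \emph{second} candidate line. The paper's witness works precisely by exploiting that extra freedom: it prescribes the integral of $f$ over regions cut out by \emph{two} lines per trade type (slopes $m_i$ and $n_i$), concentrating mass in thin slivers between them. The paper explicitly remarks that every counterexample it found has extreme density variance within quadrants and conjectures uniqueness for ``reasonable'' distributions — and a quadrant-constant density is about as reasonable as they come. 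So you should expect every seed of your iteration to converge to the same fixed point no matter how you tune the quadrant masses, and the bulk of your projected effort would be spent searching a family that contains no witness.

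Second, the certification step fails for half the equilibria you hope to find. If multiplicity is born at an eigenvalue of $D\Phi$ crossing one, generically one of the two resulting fixed points is unstable, and for an unstable fixed point no small closed box $B$ satisfies $\Phi(B) \subseteq B$: forward invariance of a small neighborhood is essentially an attraction property. You would need a degree-theoretic tool instead (Miranda's theorem or the fixed-point index) for the repelling branch, plus enough separation from the bifurcation point to keep the two certification regions disjoint — neither of which your plan supplies. The paper avoids both difficulties by reversing the quantifiers: it chooses the two angle vectors \emph{first}, writes down region masses under which Equations~\eqref{eq:k1}--\eqref{eq:k8} hold exactly at both (arranged so that $Q_1^+ = Q_1^- = Q_2^+ = Q_2^- = \tfrac{1}{2}$, making the construction valid for every odd $n$), and then takes any almost-everywhere-positive $f$ realizing those masses. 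That reverse-engineering replaces your numerics, interval arithmetic, and stability analysis with direct substitution into the equilibrium equations, and it is the missing idea your proposal would eventually need.
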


\begin{proof}
    We prove this with a family of examples. Let $f$ be any joint probability distribution that is positive almost everywhere whose integrals in each of the regions shown is given by the values shown in Figure \ref{fig:nonunique_integrals}.

    Straightforward computations who that any such function will have at least two equilibria defined by the $m_i$ and $n_i$ coefficients. Therefore, our Nash equilibria are not necessarily unique. Notice that in Figure~\ref{fig:nonunique_integrals}, $Q_1^+ = Q_1^-=Q_2^+=Q_2^-=\frac{1}{2}$, so this example holds for all $n$. 
\end{proof}

\begin{figure}
   \centering
   \includegraphics[width=\textwidth]{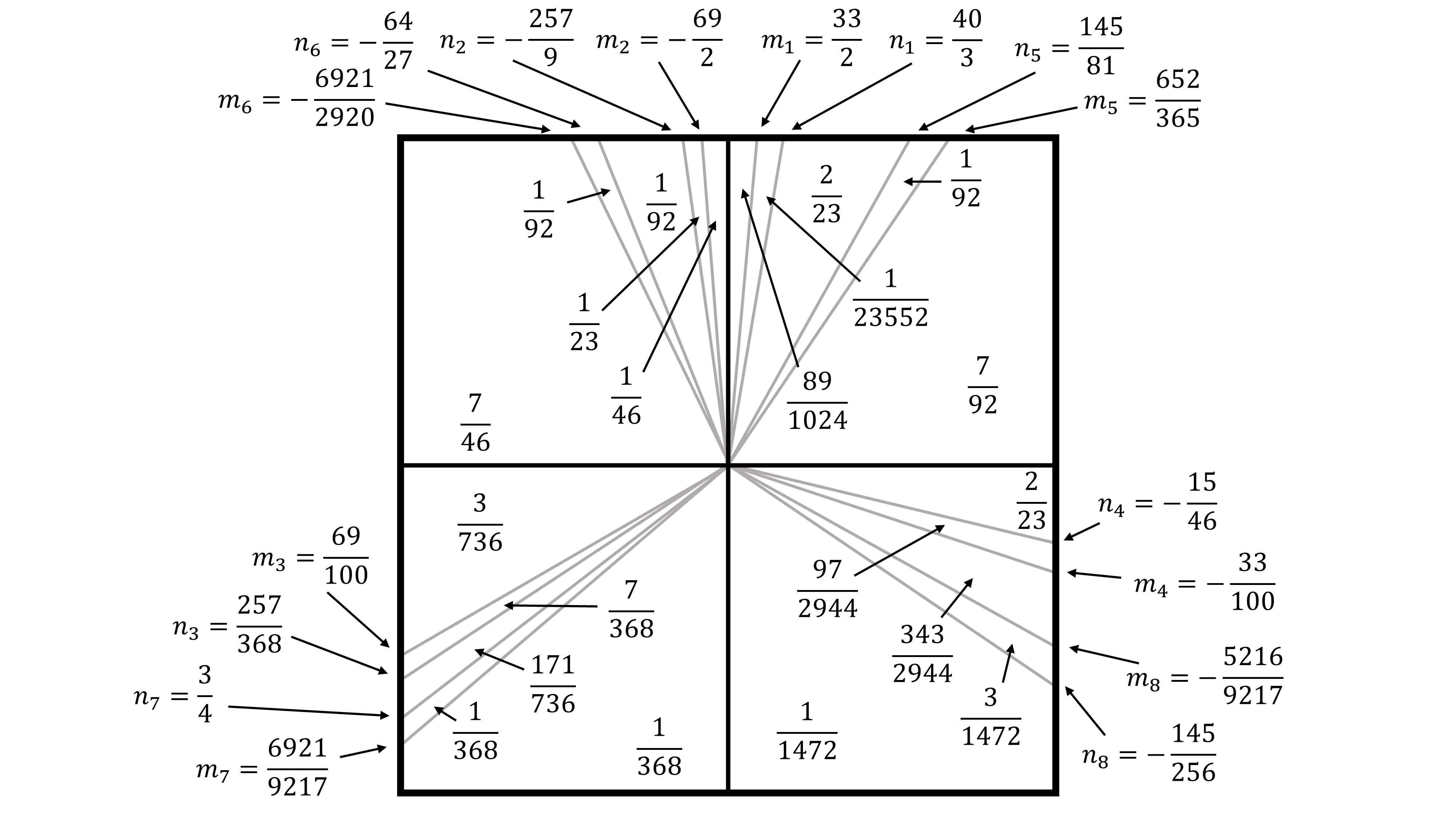}
   \caption{An integral representation of a function with two Nash equilibria. Each region is defined by one or two lines with slopes indicated by the $m_i$ and $n_i$. The integral of each region is indicated by the values inside the square. Note that for clarity, these regions are not necessarily to scale. Several of these regions are too small be represented accurately and be readable, so the drawn lines do not represent the true slopes.}
   \label{fig:nonunique_integrals}
\end{figure}

As a comment, the distribution shown in Figure \ref{fig:nonunique_integrals} has extremely high density variance. Some tiny slivers of the square have a large fraction of the total utility pairs (like the triangle defined by $n_2$ and $m_2$) while other large regions have essentially no mass (like the trapezoid bordered by $n_8$). All the counterexamples that we have generated seem to have this property, and we conjecture that all ``reasonable'' utility distributions have a unique non-trivial equilibria, although this seems difficult to state rigorously, let alone prove.

\subsection{Group Welfare}

Theorem~\ref{thm:ne} has given us a way to determine which sets of trades form an Nash equilibrium under any particular density function, so now we can consider the effect such trades will have on the group as a whole. Recall that we define the payoff from any decision for a group to be the sum of payoffs for the individuals in the group. This allows us to compute the expected value of a trade \emph{for the whole group} by extending Equation \eqref{eq:expectation} to account for the utilities of all members of the group. The derivations and resulting equations are uncomplicated but tedious, so we have placed them in Appendix~\ref{sec:group_welfare_deriv} and simply demonstrate visually with some examples in the main text by plotting the regions where trades have positive expected value for the group in black.

Again, we first turn to the uniform distribution, where we know that the naive equilibrium holds. Every pair of utilities is offered to trade, but only a fraction of utility pairs have a difference between utilities that is great enough to be worth overriding the majority decision. The black regions shown in Figure~\ref{fig:uniform_dist} (c) and (d) indicate utility pairs that are beneficial for the entire group. By integrating these regions (or rather, the intersection of these regions with the $R_i$ regions), we compute the probability that a trade improves group welfare. For the uniform distribution, this probability is exactly $\frac{1}{9}$. 

For the utility distribution given in Equation~\eqref{eq:pw_const_dist}, trades are actually slightly more valuable. By integrating the black regions of Figure~\ref{fig:pw_const_dist}, we see that about 18.5\% of trades have positive expected value for the entire group.

However, these examples do not imply that vote trading will always have a low probability of improving welfare. Consider the equation
\begin{equation}\label{eq:beneficial_dist}
g_\alpha(z) = \begin{cases}
-1^\alpha \frac{\alpha + 1}{2} z^\alpha & z < 0 \\
-1^\alpha \frac{\alpha + 1}{2} (z-1)^\alpha & z>0
\end{cases}\end{equation}
and let $f(x,y) = g_\alpha(x)g_\alpha(y)$. As $\alpha$ approaches $\infty$, the distribution becomes more and more skewed and the probability of a trade being beneficial approaches 1 since most types of trades become beneficial for the group and the few that do not become increasingly unlikely. In Figure \ref{fig:beneficial_dist} where $\alpha = 4$, the probability of a beneficial trade is around 95\%.

\begin{figure}
    \centering
    \includegraphics[width = \textwidth]{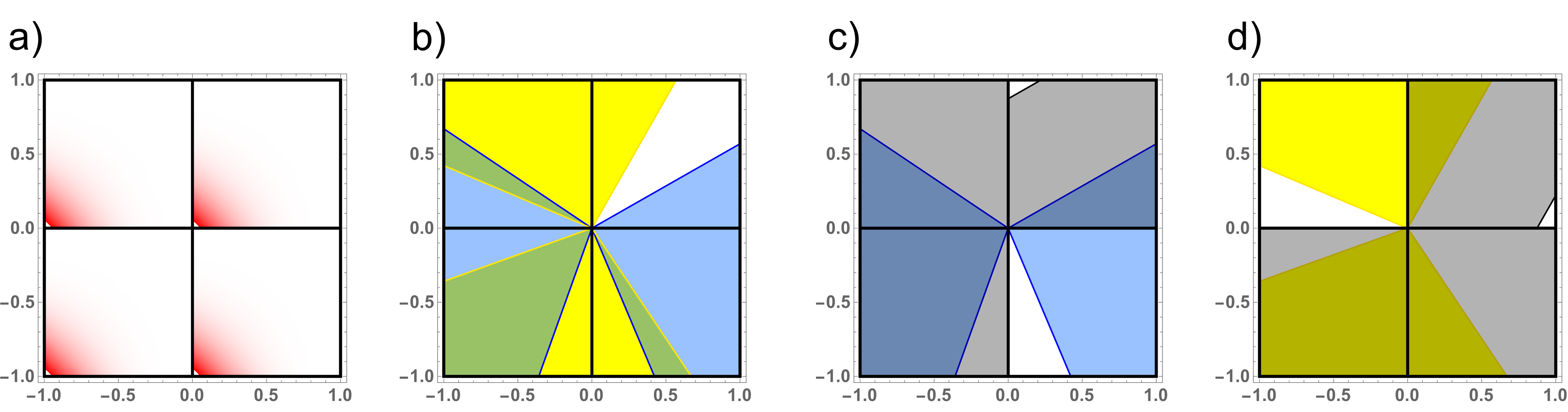}
    \caption{The results of our vote trading analysis on a highly skewed independent distribution. (a) shows a heatmap of the underlying distribution from Equation \eqref{eq:beneficial_dist} with $\alpha=4$. (b) shows the Nash equilibrium, in which many trades are offered. (c) and (d) show the trades that improve group welfare in black alongside the trades that are offered by $v_1$ and $v_2$, respectively. The regions of beneficial trades completely cover many of the regions in which trades are being offered, so the vast majority of trades are beneficial for the group.}
    \label{fig:beneficial_dist}
\end{figure}

We can also use this technique on simple distributions to draw generalized conclusions about the value of vote trading. For instance, consider two distributions, both symmetric with independent utilities, but where one has mainly uninterested voters and one with predominantly passionate voters (Figure \ref{fig:symmetric_dists} (a) and (d), respectively). Being point symmetric, the naive equilibrium (Figure \ref{fig:uniform_dist}b) holds for both cases, but trades have a very different impact on group welfare; the former has beneficial trades approximately 24.5\% of the time and the latter never has beneficial trades. 

This result is surprising in the context of previous research which suggests that vote trading tends to improve welfare when preferences are heterogeneous. The key distinction is that these previous works have assumed trading with a numeraire~\cite{Casella_Llorente-Saguer_Palfrey_2012} or in a power-sharing system~\cite{Tsakas_Xefteris_Ziros_2020}, where voters who give up a vote and lose can still gain a small degree of value or representation. In this model, the majoritarian system offers no such conciliation prize. Trading improves welfare only when the stakes are low enough that most voters are relatively indifferent to the outcome of the vote.

\begin{figure}
    \centering
    \includegraphics[width = \textwidth]{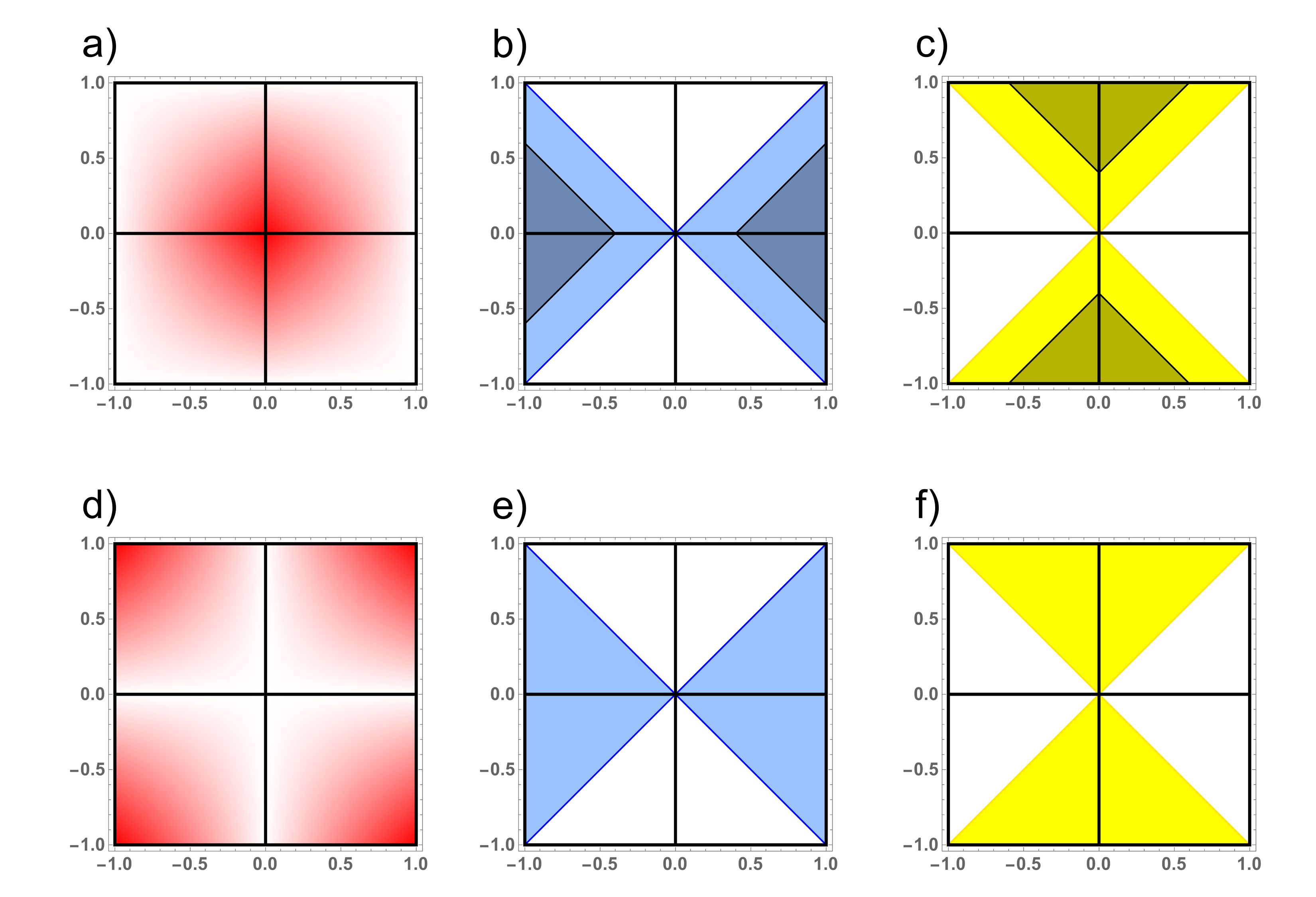}
    \caption{The difference in benefit of trades in two symmetric distributions. Both distributions have the form $f(x,y) = g(x)g(y)$ where $g$ is a symmetric 1D distribution. Subfigures (a), (b), and (c) correspond to $g(z) = \begin{cases}
        1+z & z<0 \\
        1-z & z\geq 0
    \end{cases}$
    while (d), (e), and (f) correspond to $g(z) = \begin{cases}
        -z & z<0 \\ 
        z & z\geq 0
    \end{cases}$. (a) and (c) show the distribution heatmaps. A group with many low utilities like (a) benefits from vote trading much more than a group with many extreme utilities like (d).}
    \label{fig:symmetric_dists}
\end{figure}

\section{Vote Trading in Real Populations}

These toy examples are good for generalized principles, but still a step away from determining if vote trading improves group welfare in the real world. It is very difficult to study vote trading empirically. In the political sphere, most lawmakers are unwilling to admit to voting against their preferences on an issue but the common suspicion is that vote trading is prevalent throughout legislatures~\cite{Stratmann_1995, Aksoy_2012, Cohen_Malloy_2014}. In this final section, we form a hypothetical committee made up of real voters with real preferences on real issues and examine the effect vote trading would have on this group.

The American National Election Survey~\cite{ANES_2020} regularly polls voters on a wide range of issues. On some issues, they ask for voters' preferences and the intensity of preferences, so this data allows us to estimate the joint probability distribution of American voter utilities on pairs of issues. The issues we consider here for illustrative purposes are government action on addressing economic inequality and the overall level of government regulation. These issues are highly correlated. For both issues, voters were asked about their preferences on a scale from 1 (strongly oppose) to 7 (strongly support). Using this discrete data, we apply Gaussian Kernel Density Estimation~\cite{Weglarczyk_2018} to get a continuous approximation for the joint probability distribution, which is shown in Figure~\ref{fig:anes_fig}a.

With this density function, we can find the Nash equilibrium and determine which trades are valuable. About 38.5\% trades are beneficial to the entire group. Thus, two voters trading votes on government action toward inequality and government regulation is probably harming group welfare by subverting majority rule, rather than benefiting the group by more accurately expressing voter preferences.

\begin{figure}
    \centering
    \includegraphics[width = \textwidth]{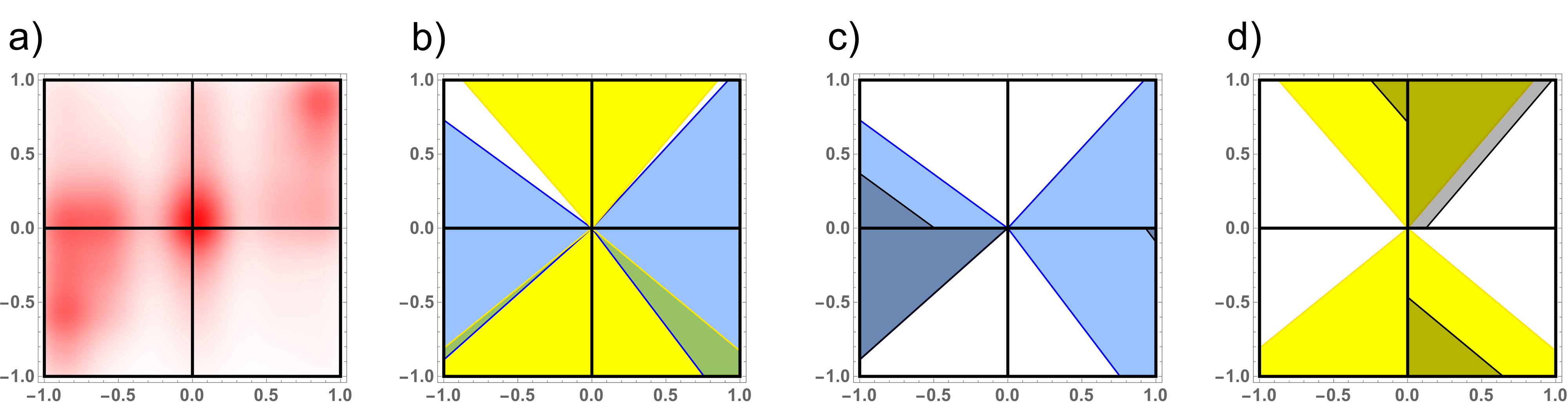}
    \caption{The results of our vote trading analysis on real voter data. (a) shows a heatmap of the underlying distribution, where the x-axis is utility on government action on inequality and the y-axis is utility on government regulation. (b) shows the equilibrium, which is close to the naive equilibrium, suggesting that the intuitive naive state can be a good heuristic for determining rational trading behavior. (c) and (d) show the trades that improve group welfare in black alongside the trades that are offered by $v_1$ and $v_2$, respectively.}
    \label{fig:anes_fig}
\end{figure}

\section{Group-wide Trading}\label{sec:group_trades}

Until now, voters assume that their trade is the only one that will be made, and that the other $n-2$ voters will vote sincerely. Of course, in most scenarios, all voters will have equal opportunity to trade votes, so in this section we extend our model to allow all voters the chance to trade. 

Like the single-trade model, we first assign all voters utilities on $t_1$ and $t_2$. Then, voters are randomly paired together and allowed to trade like before. Because $n$ is odd, one voter will be left out but the rest of the voters must take into account the other $\frac{n-3}{2}$ pairs of voters that may be trading. Unfortunately, this makes finding an exact closed-form solution difficult, because now voting behavior is dependent across individuals \emph{and} issues. In this section, we introduce an approximation of these dependent variables as independent variables which, while imperfect, allows us to see how vote trading changes when all voters have the opportunity to trade with a random partner. 

In Appendix \ref{sec:NE_deriv}, we computed the probability of the trade being the swing vote as $\binom{n-2}{\frac{n-3}{2}} (Q_1^-)^{\frac{n-1}{2}} (Q_1^+)^{\frac{n-3}{2}}$. When allowing all pairs of voters to trade, we replace these $Q$ terms with new quantities that take into account the probability that the vote was traded away. To see the details of this approximation, see Appendix \ref{sec:group_welfare_deriv}. 

Appendix \ref{sec:group_welfare_deriv} also includes recreations of all previous results under this new model. Many of the results are changed only slightly, since it is fruitless to try to take other trades into account unless you can predict the direction of those trades. However, we do see a change in the group welfare implications of trades for the distribution in Figure \ref{fig:beneficial_dist}, an example that was engineered specifically to encourage trades in one direction. In the original model, vote trading was very valuable because it made it more likely that the group would choose the high preference intensity options. Once all voters are allowed to trade, however, the probability of any one trade being influential goes way down, and the welfare gains of trading also decrease.

\section{Discussion}

The study of vote trading is far from over; this model of vote trading makes several assumptions of varying strength. Voters have zero information about the preferences of their trading partners, but complete information about the preferences of the entire group. For simplicity, we restrict voters to one-for-one vote trades, ignoring the possibilities of trading multiple unimportant issues for one supercritical issue or more than two voters gathering to swap a multitude of issues between them all. The primary model also assumes myopic trades where the group votes immediately after a single trade is made, removing the need to consider the impact of other trades on the distribution of utilities. When we loosen this assumption, our approximation reveals changes in the trades being offered, primarily when a distribution is only conducive to trades in one direction. All these simplifications and approximations are avenues for future investigation. For example, equipping voters with partial information about the preferences of their partners could illuminate the middle ground between previous complete information models and the zero information model presented here. 

Despite these limitations, this paper makes several new contributions to our understanding of vote trading. First, it highlights that the welfare implications of vote trading are dependent on the underlying utility distributions. The question of if vote trading improves group welfare does not have a simple yes/no answer. However, for many reasonable utility distributions, including distributions drawn from real data, vote trading adds value for the entire group infrequently, at best. One ray of hope here is that the vote that are beneficial for the whole group are also the most beneficial for the voters themselves, so if voters restrain themselves to only trading on issues where they stand to gain the most utility, the value for the group could also go up.

Second, the toy model presented here gives real insights into the dynamics of vote trading. We see the role of symmetry in the Nash equilibria and we get the counterintuitive result that some utility pairs can be traded in either direction while others cannot be traded for positive value at all. Although these equilibria are not unique in general, our investigation suggests that in most realistic scenarios, rational vote traders should converge to a single Nash equilibrium.

Finally, this paper outlines a probabilistic method for thinking about vote trading, and we believe this could open up new avenues in the study of vote trading.

\section*{Acknowledgements}
We wish to thank Nicholas Christakis, Larry Samuelson, Dimitrios Xefteris, and Alessandra Casella for their helpful comments. This work is supported by the Human Nature Lab and the Sunwater Institute.

\section*{Data Availability}
The data and code used for this project are publicly available at 

https://github.com/MattJonesMath/Vote\_Trading\_Equilibria.

\bibliography{refs}

\newpage

\begin{appendices}

\section{Proof of Theorem \ref{thm:ne}}\label{sec:NE_deriv}

Suppose we have two issues, $t_1$ and $t_2$, and two players, $v_1$ and $v_2$. $v_1$ will be giving away his vote on $t_2$ in exchange for an additional vote on $t_1$ and $v_2$ will be giving away her vote on $t_1$ in exchange for an additional vote on $t_2$. 

Utilities on the two issues are assigned according to a joint probability distribution $f(x,y): [-1,1]^2 \to \mathbb R$ where $x$ is the utility on $t_1$ and $y$ is the utility on $t_2$. By integrating this function, we can determine the probability a random voter supports or opposes $t_1$ and $t_2$:

\begin{equation*}
    \text{Prob}(t_1 \text{ utility} > 0) = Q_1^+ = \int_0^1 \int_{-1}^1 f(x,y) dy dx
\end{equation*}
\begin{equation*}
    \text{Prob}(t_1 \text{ utility} < 0) = Q_1^- = \int_{-1}^0 \int_{-1}^1 f(x,y) dy dx
\end{equation*}
\begin{equation*}
    \text{Prob}(t_2 \text{ utility} > 0) = Q_2^+ = \int_0^1 \int_{-1}^1 f(x,y) dx dy
\end{equation*}
\begin{equation*}
    \text{Prob}(t_2 \text{ utility} < 0) = Q_2^- = \int_{-1}^0 \int_{-1}^1 f(x,y) dx dy
\end{equation*}

We assume all four quantities are nonzero; otherwise, trading fails as certain individuals have no incentive to trade. As shown in Figure \ref{fig:dependent_both_players}, there will be eight regions of interest, $R_i$, representing the eight different types of trades, and each is defined by a constant $\theta_i$ which is the angle of the bounding line with the x or y axis.

To compute the probability that a vote trader supports or opposes an issue, we integrate over these regions and normalize as needed. 

\begin{equation*}
    I_* = \int \int_{R_*} f(x,y) dx dy
\end{equation*}

\begin{equation*}
    I_{S1} = I_1 + I_2 + I_3 + I_4
\end{equation*}

\begin{equation*}
    I_{S2} = I_5 + I_6 + I_7 + I_8
\end{equation*}

With this, we can write down all the necessary probabilities:

\begin{equation*}
    P(u_1>0) = \frac{I_1+I_4}{I_{S1}}
\end{equation*}
\begin{equation*}
    P(u_1<0) = \frac{I_2+I_3}{I_{S1}}
\end{equation*}
\begin{equation*}
    P(u_2>0) = \frac{I_1+I_2}{I_{S1}}
\end{equation*}
\begin{equation*}
    P(u_2<0) = \frac{I_3+I_4}{I_{S1}}
\end{equation*}
\begin{equation*}
    P(u_3>0) = \frac{I_5+I_6}{I_{S2}}
\end{equation*}
\begin{equation*}
    P(u_3<0) = \frac{I_7+I_8}{I_{S2}}
\end{equation*}
\begin{equation*}
    P(u_4>0) = \frac{I_5+I_8}{I_{S2}}
\end{equation*}
\begin{equation*}
    P(u_4<0) = \frac{I_6+I_7}{I_{S2}}
\end{equation*}

Now we demonstrate how to use Equation \eqref{eq:expectation} for $v_1$ when $u_1>0$ and $u_2>0$, which determines the optimal value of $\theta_1$. The other seven cases are similar.

First, consider what happens to the vote on $t_1$. Nothing happens unless $v_2$ has negative utility on $t_1$, i.e. $u_4 < 0$. By above, this occurs with probability $\frac{I_6 + I_7}{I_{S2}}$. Conditional on this being true, the vote only changes if $v_2$ was the swing vote, which means that $\frac{n-1}{2}$ of the non-trading voters have negative utility on $t_1$ and the remaining $\frac{n-3}{2}$ voters have positive utility. This happens with probability $\binom{n-2}{\frac{n-3}{2}} (Q_1^-)^{\frac{n-1}{2}} (Q_1^+)^{\frac{n-3}{2}}$. We multiply these quantities by $2u_1$ to get the expected value for $v_1$ of gaining an extra vote on $t_1$. We repeat the process to compute the expected value of giving up a vote on $t_2$ to get

\begin{align*}
&E(\text{value of trade}) = \\
&2 u_1 \frac{I_6 + I_7}{I_{S2}} \binom{n-2}{\frac{n-3}{2}} (Q_1^-)^{\frac{n-1}{2}} (Q_1^+)^{\frac{n-3}{2}} \\
    -&2 u_2 \frac{I_7 + I_8}{I_{S2}} \binom{n-2}{\frac{n-3}{2}} (Q_2^-)^{\frac{n-3}{2}} (Q_2^+)^{\frac{n-1}{2}} \\
     = & \frac{2}{I_{S2}} \binom{n-2}{\frac{n-3}{2}} \left( u_1(I_6+I_7)(Q_1^-)^\frac{n-1}{2} (Q_1^+)^\frac{n-3}{2} - u_2 (I_7+I_8) (Q_2^-)^\frac{n-3}{2} (Q_2^+)^\frac{n-1}{2}\right)
\end{align*}

By setting the last line of this equality to zero and solving for $\theta_1 = \arctan\left( \left|\frac{u_2}{u_1}\right| \right)$, we find the trades that have expected value zero.

\begin{equation}
    \theta_1 = \arctan \left( \frac{I_6+I_7}{I_7+I_8}\frac{(Q_1^-)^\frac{n-1}{2} (Q_1^+)^\frac{n-3}{2}}{(Q_2^-)^\frac{n-3}{2} (Q_2^+)^\frac{n-1}{2}} \right)
\end{equation}

To ensure that this is well-defined, if $I_7 + I_8 = 0$, then $\theta_1 = \frac{\pi}{2} = \lim_{x \to \infty} \arctan(x)$. We leave this function undefined when the numerator and denominator are zero; in this case, no trade will happen and so there is no change in value, positive or negative.

Repeat this process for the other seven $\theta_i$ values. Solutions to this set of equations represent a Nash equilibrium, since they are a best response to themselves. In fact, it is a strict Nash equilibrium. Any deviation from this strategy, when played against this strategy, either offers trades with negative expected value or refuses trades with positive expected value; in both cases, the deviant strategy has a lower expected payoff. 

\section{Proof of Corollary \ref{thm:existence}}\label{sec:exist_deriv}

Let $(\theta_1, \dots, \theta_8)$ be a solution to Equations~\eqref{eq:k1} - \eqref{eq:k8}. First, we show that if this point is not exactly the origin, then it must not be located near the origin.

\begin{lemma}
    Let $Q_{min} = min \left\{ \frac{(Q_1^-)^\frac{n-1}{2} (Q_1^+)^\frac{n-3}{2}}{(Q_2^-)^\frac{n-3}{2} (Q_2^+)^\frac{n-1}{2}}, \dots, \frac{(Q_2^-)^\frac{n-3}{2} (Q_2^+)^\frac{n-1}{2}}{(Q_1^-)^\frac{n-3}{2} (Q_1^+)^\frac{n-1}{2}} \right\}$ and $\theta_{min} = \arctan(Q_{min})$. 

    If $\theta_i > 0$ for any $i$, then all $\theta_i>0$ for all $i$.
    
    Furthermore, if any $\theta_i > 0$, then at least two of $\theta_1$ through $\theta_4$ must be greater than or equal to $\theta_{min}$, as well as at least two of $\theta_5$ through $\theta_8$.
\end{lemma}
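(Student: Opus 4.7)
The plan is to exploit two structural features of Equations~\eqref{eq:k1}--\eqref{eq:k8}. First, each $\theta_i$ is the arctangent of a non-negative quantity, so $\theta_i \in [0, \pi/2]$; it equals $0$ exactly when the numerator of its defining fraction vanishes (if the denominator vanishes, the convention $\theta_i = \pi/2$ kicks in). Second, since $f$ is continuous and positive almost everywhere, the region $R_i$ has positive measure exactly when $\theta_i > 0$, yielding the equivalence $\theta_i > 0 \iff I_i > 0$. With these two observations, both parts of the lemma reduce to tracking which $I_j$'s feed into which $\theta_i$.

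For the first claim, I would observe that the eight numerators pair up: $\theta_1$ and $\theta_4$ share $I_6 + I_7$; $\theta_2$ and $\theta_3$ share $I_5 + I_8$; $\theta_5$ and $\theta_6$ share $I_3 + I_4$; $\theta_7$ and $\theta_8$ share $I_1 + I_2$. Within each pair, the two members are simultaneously zero or simultaneously positive. Assuming $\theta_1 > 0$, we obtain $\theta_4 > 0$, and $I_6 + I_7 > 0$ forces $\theta_6 > 0$ or $\theta_7 > 0$. A short case analysis using the pair equivalences and the implication $\theta_j > 0 \Rightarrow I_j > 0$ shows that in either subcase the remaining $\theta_i$ must all be positive; for instance, $\theta_4 > 0$ gives $I_4 > 0$, hence $I_3 + I_4 > 0$, hence $\theta_5, \theta_6 > 0$, which then feeds $I_5 + I_8$ or $I_1 + I_2$ and closes the loop. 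The same chain of implications, starting from any index, gives the first claim.

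For the second claim, write each $\theta_i = \arctan(R_i \cdot Q_{[i]})$, where $R_i$ is a ratio of sums of $I_j$'s and $Q_{[i]}$ is the specific $Q$-ratio appearing in the $i$-th equation (and hence one of the terms in the set defining $Q_{min}$). When $R_i \geq 1$, we immediately get $\theta_i \geq \arctan(Q_{[i]}) \geq \arctan(Q_{min}) = \theta_{min}$. Direct inspection reveals a clean dichotomy: $R_1 \geq 1 \iff I_6 \geq I_8$ while $R_3 \geq 1 \iff I_8 \geq I_6$, so at least one of $\theta_1, \theta_3$ clears the threshold; likewise $R_2 \geq 1 \iff I_5 \geq I_7$ and $R_4 \geq 1 \iff I_7 \geq I_5$, yielding one of $\theta_2, \theta_4$. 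Since $\{\theta_1,\theta_3\}$ and $\{\theta_2,\theta_4\}$ are disjoint, at least two of $\theta_1, \dots, \theta_4$ are $\geq \theta_{min}$. Parallel dichotomies compare $I_2$ with $I_4$ (giving one of $\theta_5, \theta_7$) and $I_1$ with $I_3$ (giving one of $\theta_6, \theta_8$), yielding at least two of $\theta_5, \dots, \theta_8$.

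The main obstacle here is purely bookkeeping: the eight equations present a combinatorial thicket and one must carefully track which $I_j$'s appear in which numerator and denominator to spot the four dichotomies. Once the pairings are identified, both halves of the lemma follow from these one-line computations combined with the monotonicity of $\arctan$.
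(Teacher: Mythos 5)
Your proposal is correct and follows essentially the same route as the paper's proof: part one is the same chain of positivity implications propagated through the shared numerators $I_6+I_7$, $I_5+I_8$, $I_3+I_4$, $I_1+I_2$ (with the equivalence $\theta_i>0 \iff I_i>0$ from $f$ positive a.e.), and part two uses the same four ratio dichotomies ($I_6$ vs.\ $I_8$, $I_5$ vs.\ $I_7$, $I_2$ vs.\ $I_4$, $I_1$ vs.\ $I_3$) together with monotonicity of $\arctan$ and $Q_{[i]} \geq Q_{min}$. In fact your pairing $\{\theta_1,\theta_3\}$ quietly repairs an apparent typo in the paper, which writes ``$\theta_1 \geq \theta_{min}$ or $\theta_6 > \theta_{min}$'' even though $I_8 > I_6$ forces $\theta_3 > \theta_{min}$ (not $\theta_6$), and the pairing with $\theta_3$ is what actually delivers two thresholds among $\theta_1,\dots,\theta_4$.
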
 

\begin{proof}
    Suppose without loss of generality that $\theta_1 \neq 0$. 

    Using Equations~\eqref{eq:k1} - \eqref{eq:k8}, $\theta_1 > 0 \implies \theta_7, \theta_8 > 0 \implies \theta_2, \theta_3, \theta_4 > 0 \implies \theta_5, \theta_6 > 0$.
    
    By Equation~\eqref{eq:k7}, either $\frac{I_1+I_2}{I_1+I_4} >= 1$ and therefore $\theta_7 >= \theta_{min}$, or $I_4 > I_2$, in which case $\theta_5 > \theta_{min}$. 

    Likewise, by Equation~\eqref{eq:k8}, $\theta_8 >= \theta_{min}$ or $I_3>I1$ and therefore $\theta_6 > \theta_{min}$. With the same method, we get that $\theta_1 >= \theta_{min}$ or $\theta_6 > \theta_{min}$ and that $\theta_2 >= \theta_{min}$ or $\theta_4 > \theta_{min}$ .
\end{proof}

Note that this implies that if any $\theta_i > 0$, then Equations~\eqref{eq:k1} - \eqref{eq:k8} are all well-defined, since we never have $\frac{0}{0}$.

For each $(\theta_1, \dots, \theta_8)$, there is a best response, i.e. a set of trades that have positive value and a set of trades that have negative value. Equations~\eqref{eq:k1} - \eqref{eq:k8} tell us how to define this function $BR: [0, \frac{\pi}{2}]^8 \to [0, \frac{\pi}{2}]^8$.

\begin{equation}
    (\theta_1, \dots, \theta_8) \mapsto \left( 
    \arctan \left( \frac{I_6(\theta_6)+I_7(\theta_7)}{I_7(\theta_7)+I_8(\theta_8)}\frac{(Q_1^-)^\frac{n-1}{2} (Q_1^+)^\frac{n-3}{2}}{(Q_2^-)^\frac{n-3}{2} (Q_2^+)^\frac{n-1}{2}} \right), \dots, \arctan \left( \frac{I_1(\theta_1)+I_2(\theta_2)}{I_2(\theta_2)+I_3(\theta_3)} \frac{(Q_2^-)^\frac{n-3}{2} (Q_2^+)^\frac{n-1}{2}}{(Q_1^-)^\frac{n-3}{2} (Q_1^+)^\frac{n-1}{2}} \right)
    \right)
\end{equation}

Let $R$ be the subset of $[0, \frac{\pi}{2}]^8$ where $\theta_1 + \theta_3 >= \theta_{min}$, $\theta_2 + \theta_4 >= \theta_{min}$, $\theta_5 + \theta_7 + >= \theta_{min}$, and $\theta_6 + \theta_8 >= \theta_{min}$. $R$ is clearly convex and compact. By the above lemma, $BR$ maps $R$ to $R$, is well-defined, and is continuous. Therefore, we can apply Brouwer's Fixed Point Theorem and are done. We have bounded away from the origin, so we know that the fixed point, which is a Nash equilibrium, is not the trivial Nash equilibrium.

\section{Group Welfare Derivations}\label{sec:group_welfare_deriv}

Here, we derive bounds on the trades that provide benefits (in expectation) for the entire group, not just the vote traders. We use a modification of Equation \eqref{eq:expectation} that includes the utility of all members of the group.

\begin{equation}\label{eq:expectation_group}
E(u_1,u_2) = 2 E(\text{value for all voters}) P(A)P(C|A) \\
- 2 E(\text{value for all voters}) P(B)P(D|B)
\end{equation}

We can write down these terms explicitly for $u_1>0$ and $u_2>0$. The first term in the parentheses is the vote trader, the second term is the trading partner, the third term is the $\frac{n-1}{2}$ voters that don't agree with the trader, and the fourth term is the $\frac{n-3}{2}$ voters that do agree. 

\begin{align*}
   E(u_1,u_2) =& \frac{I_6 + I_7}{I_{S2}} \binom{n-2}{\frac{n-3}{2}} (Q_1^-)^{\frac{n-1}{2}} (Q_1^+)^{\frac{n-3}{2}} \\
   \Bigg(&2u_1 +2 \frac{1}{I_6+I_7} \iint_{R_6 \cup R_7} x f(x,y) dx dy \\
    &+2 \frac{n-1}{2}\frac{1}{Q_1^-} \int_{-1}^0\int_{-1}^1 x f(x,y) dy dx +2 \frac{n-3}{2} \frac{1}{Q_1^+} \int_0^1\int_{-1}^1 x f(x,y) dy dx \Bigg)\\
    + &\frac{I_7 + I_8}{I_{S2}} \binom{n-2}{\frac{n-3}{2}} (Q_2^-)^{\frac{n-3}{2}} (Q_2^+)^{\frac{n-1}{2}} \cdot -1 \cdot \\ \Bigg(&2u_2 +2 \frac{1}{I_7+I_8} \iint_{R_7 \cup R_8} y f(x,y) dx dy \\
    &+2 \frac{n-3}{2} \frac{1}{Q_2^-} \int_{-1}^0\int_{-1}^1 y f(x,y) dx dy +2 \frac{n-1}{2}\frac{1}{Q_2^+} \int_0^1\int_{-1}^1 y f(x,y) dx dy\Bigg)
\end{align*}

Like before, we can set this expression equal to zero and solve to find the trades with expected value zero from the trade. After removing some common terms, we get 

\begin{equation}\label{eq:beneficial_eq_first}
    \text{Quadrant 1:} \quad \frac{a_1}{c_1}u_1 + \frac{a_1 b_1}{c_1} - d_1 \geq u_2
\end{equation}

where 

$$a_1 = (I_6+I_7)(Q_1^-)^\frac{n-1}{2}(Q_1^+)^\frac{n-3}{2}$$

$$b_1 = \frac{1}{I_6+I_7}\iint_{R_6 \cup R_7} x f(x,y) dx dy + \frac{n-1}{2}\frac{1}{Q_1^-} \int_{-1}^0\int_{-1}^1 x f(x,y) dy dx + \frac{n-3}{2} \frac{1}{Q_1^+} \int_0^1\int_{-1}^1 x f(x,y) dy dx$$

$$c_1 = (I_7+I_8)(Q_2^-)^\frac{n-3}{2}(Q_2^+)^\frac{n-1}{2}$$

$$d_1 = \frac{1}{I_7+I_8} \iint_{R_7 \cup R_8} y f(x,y) + \frac{n-3}{2} \frac{1}{Q_2^-} \int_{-1}^0\int_{-1}^1 y f(x,y) dx dy + \frac{n-1}{2}\frac{1}{Q_2^+} \int_0^1\int_{-1}^1 y f(x,y) dx dy$$

Notice that this line is parallel to the boundary of $R_1$, just shifted by a factor of $\frac{a_1 b_1}{c_1} - d_1$. If this term is sufficiently negative, there may be no trades with $u_1>0$ and $u_2 > 0$ that have positive value for the group.

The exact same process can be repeated for all eight types of trades. We give the final expressions now.

\begin{equation}
    \text{Quadrant 2:} \quad -\frac{a_2}{c_2} u_1 - \frac{a_2 b_2}{c_2} - d_2 \geq u_2
\end{equation}

\begin{equation}
    \text{Quadrant 3:} \quad \frac{a_3}{c_3} u_1 + \frac{a_3 b_3}{c_3} - d_3 \leq u_2
\end{equation}

\begin{equation}
    \text{Quadrant 4:} \quad -\frac{a_4}{c_4} u_1 - \frac{a_4 b_4}{c_4} - d_4 \leq u_2
\end{equation}

\begin{equation}
    \text{Quadrant 5:} \quad u_3 \geq \frac{c_5}{a_5} u_4 + \frac{c_5 d_5}{a_5} - b_5
\end{equation}

\begin{equation}
    \text{Quadrant 6:} \quad u_3 \geq - \frac{c_6}{a_6}u_4 - \frac{c_6 d_6}{a_6} - b_6
\end{equation}

\begin{equation}
    \text{Quadrant 7:} \quad u_3 \leq \frac{c_7}{a_7} u_4 + \frac{c_7 d_7}{a_7} - b_7
\end{equation}

\begin{equation}\label{eq:beneficial_eq_last}
    \text{Quadrant 8:} \quad u_3 \leq - \frac{c_8}{a_8} u_4 - \frac{c_8 d_8}{a_8} - b_8
\end{equation}

The coefficients can be calculated as:

$$a_2 = (I_5 + I_8)(Q_1^-)^\frac{n-3}{2}(Q_1^+)^\frac{n-1}{2}$$

$$b_2 = \frac{1}{I_5 + I_8}\iint_{R_5 \cup R_8} x f(x,y) dx dy + \frac{n-3}{2}\frac{1}{Q_1^-} \int_{-1}^0 \int_{-1}^1 x f(x,y) dy dx + \frac{n-1}{2}\frac{1}{Q_1^+} \int_0^1\int_{-1}^1x f(x,y)dy dx$$

$$c_2 = c_1$$

$$d_2 = d_1$$

$$a_3 = a_2$$

$$b_3 = b_2$$

$$c_3 = (I_5+I_6)(Q_2^-)^\frac{n-1}{2}(Q_2^+)^\frac{n-3}{2}$$

$$d_3 = \frac{1}{I_5+I_6} \iint_{R_5 \cup R_6} y f(x,y) dx dy + \frac{n-1}{2} \frac{1}{Q_2^-} \int_{-1}^0 \int_{-1}^1 y f(x,y) dx dy + \frac{n-3}{2} \frac{1}{Q_2^+} \int_0^1 \int_{-1}^1 y f(x,y) dx dy$$

$$a_4 = a_1$$

$$b_4 = b_1$$

$$c_4 = c_3$$

$$d_4 = c_3$$

$$a_5 = (I_3+I_4)(Q_2^-)^\frac{n-1}{2}(Q_2^+)^\frac{n-3}{2}$$

$$b_5 = \frac{1}{I_3+I_4} \iint_{R_3 \cup R_4} y f(x,y)dxdy + \frac{n-1}{2}\frac{1}{Q_2^-} \int_{-1}^0\int_{-1}^1 y f(x,y) dx dy + \frac{n-3}{2}\frac{1}{Q_2^+} \int_0^1\int_{-1}^1 y f(x,y) dx dy$$

$$c_5 = (I_2 + I_3)(Q_1^-)^\frac{n-3}{2}(Q_1^+)^\frac{n-1}{2}$$

$$d_5 = \frac{1}{I_2+I_3} \iint_{R_2 \cup R_3} x f(x,y) dx dy + \frac{n-3}{2}\frac{1}{Q_1^-} \int_{-1}^0\int_{-1}^1 x f(x,y)dydx + \frac{n-1}{2}\frac{1}{Q_1^+} \int_0^1\int_{-1}^1 x f(x,y) dy dx$$

$$a_6 = a_5$$

$$b_6 = b_5$$

$$c_6 = (I_1+I_4)(Q_1^-)^\frac{n-1}{2}(q_1^+)^\frac{n-3}{2}$$

$$d_6 = \frac{1}{I_1+I_4}\iint_{R_1 \cup R_4} x f(x,y) dx dy + \frac{n-1}{2}\frac{1}{Q_1^-} \int_{-1}^0 \int_{-1}^1 x fx,y) dy dx + \frac{n-3}{2}\frac{1}{Q_1^+} \int_0^1 \int_{-1}^1 x f(x,y) dy dx$$

$$a_7 = (I_1+I_2)(Q_2^-)^\frac{n-3}{2}(Q_2^+)^\frac{n-1}{2}$$

$$b_7 = \frac{1}{I_1+I_2}\iint_{R_1 \cup R_2} y f(x,y) dx dy + \frac{n-3}{2} \frac{1}{Q_2^-} \int_{-1}^0\int_{-1}^1 y f(x,y) dx dy + \frac{n-1}{2} \frac{1}{Q_2^+} \int_0^1\int_{-1}^1 y f(x,y) dx dy$$

$$c_7 = c_6$$

$$d_7 = d_6$$

$$a_8 = a_7$$

$$b_8 = b_7$$

$$c_8 = c_5$$

$$d_8 = d_5$$

\section{Group-wide Trading Details}\label{sec:group_trading_details}

We will need to know the probability that a voter is willing to trade in either direction, so we define the following quantities:

\begin{equation}
    J_1 = \iint_{R_1 \cap R_5} f(x,y) dx dy
\end{equation}

\begin{equation}
    J_2 = \iint_{R_2 \cap R_6} f(x,y) dx dy
\end{equation}

\begin{equation}
    J_3 = \iint_{R_3 \cap R_7} f(x,y) dx dy
\end{equation}

\begin{equation}
    J_4 = \iint_{R_4 \cap R_8} f(x,y) dx dy
\end{equation}

A voter $v$ who is deciding what trade to offer needs to know the probability that the trade will change the outcome of the vote. Consider a random other voter $w$ in the population that is not paired with $v$. First, let us determine the probability that $w$ ultimately casts a vote in support of $t_1$.

$w$ initially supports $t_1$ with probability $Q_1^+$. With probability $\frac{n-3}{n-2}$, $w$ has the opportunity to trade. 

There is also a chance that $w$ gives away their vote to someone that opposes $t_1$. This happens if $w$'s utilities fall in $R_5$ or $R_8$ and $w$'s trading partner is in $R_2$ or $R_3$, which happens with probability $(I_5 + I_8)(I_2+I_3)$ which we subtract from the initial probability. However, if $w$ and their partner are both willing to trade in either direction, then $w$ only gives up their vote half the time, so the probability that $w$ has a positive utility on $t_1$ but trades it away is $(I_5 + I_8)(I_2+I_3) - \frac{1}{2} (J_1 + J_4)(J_2+J_3)$.

Similarly, $w$ could also vote for $t_1$ if they initially oppose the issue but give away their vote to someone who supports it. Therefore, we must add $(I_6 + I_7)(I_1+I_4) - \frac{1}{2} (J_2+J_3)(J_1+J_4)$. 

When we add all three of these terms together, we have the probability that voter $w$ votes in favor of $t_1$:

$$\mathcal{Q}_1^+ = Q_1^+ + \frac{n-3}{n-2} \left( (I_6+I_7)(I_1+I_4) - \frac{1}{2} (J_2+J_3)(J_1+J_4) - (I_5+I_8)(I_2+I_3) + \frac{1}{2} (J_1+J_4)(J_2+J_3)  \right)$$

The $J$ terms cancel out, and we are left with

\begin{equation}\label{eq:calQ1plus}
    \mathcal{Q}_1^+ = Q_1^+ + \frac{n-3}{n-2} \bigg( (I_6+I_7)(I_1+I_4) - (I_5+I_8)(I_2+I_3)  \bigg). 
\end{equation}

A similar process gives the other necessary probabilities.

\begin{equation}\label{eq:calQ1minus}
    \mathcal{Q}_1^- = Q_1^- + \frac{n-3}{n-2} \bigg( (I_5+I_8)(I_2+I_3) - (I_6+I_7)(I_1+I_4) \bigg)
\end{equation}

\begin{equation}\label{eq:calQ2plus}
    \mathcal{Q}_2^+ = Q_2^+ + \frac{n-3}{n-2} \bigg( (I_3+I_4)(I_5+I_6) - (I_1+I_2)(I_7+I_8) \bigg)
\end{equation}

\begin{equation}\label{eq:calQ2minus}
    \mathcal{Q}_2^- = Q_2^- + \frac{n-3}{n-2} \bigg( (I_1+I_2)(I_7+I_8) - (I_3+I_4)(I_5+I_6) \bigg)
\end{equation}

Now, when computing the Nash equilibrium using Equations~\eqref{eq:k1} - \eqref{eq:k8}, we replace the $Q$ terms with $\mathcal{Q}$ terms from Equations~\eqref{eq:calQ1plus} - \eqref{eq:calQ2minus} to approximate the probability of being the pivotal vote after all other voter pairs have traded. We can make this same replacement in the $a_i$ and $c_i$ terms when computing group welfare implications. Note that we do not replace the $Q$ terms in the $b_i$ and $d_i$ equations, since those are computing welfare, not the probability of being a pivotal vote.

When $f$ is point-symmetric around the origin, it has the naive Nash equilibrium, all $I_i$s are equal, and almost all terms in Equations~\eqref{eq:calQ1plus} - \eqref{eq:calQ2minus} cancel out. There is no change to the equilibrium between the myopic model and the model that allows all voters to trade votes.

We end by recreating Figures~\ref{fig:pw_const_dist}, \ref{fig:beneficial_dist}, and \ref{fig:anes_fig} with the new equilibria found when all voters are paired up and allowed to trade. The changes are modest, and most pronounced when the distribution $f$ is designed to promote trades in one direction.

\begin{figure}
    \centering
    \includegraphics[width = \textwidth]{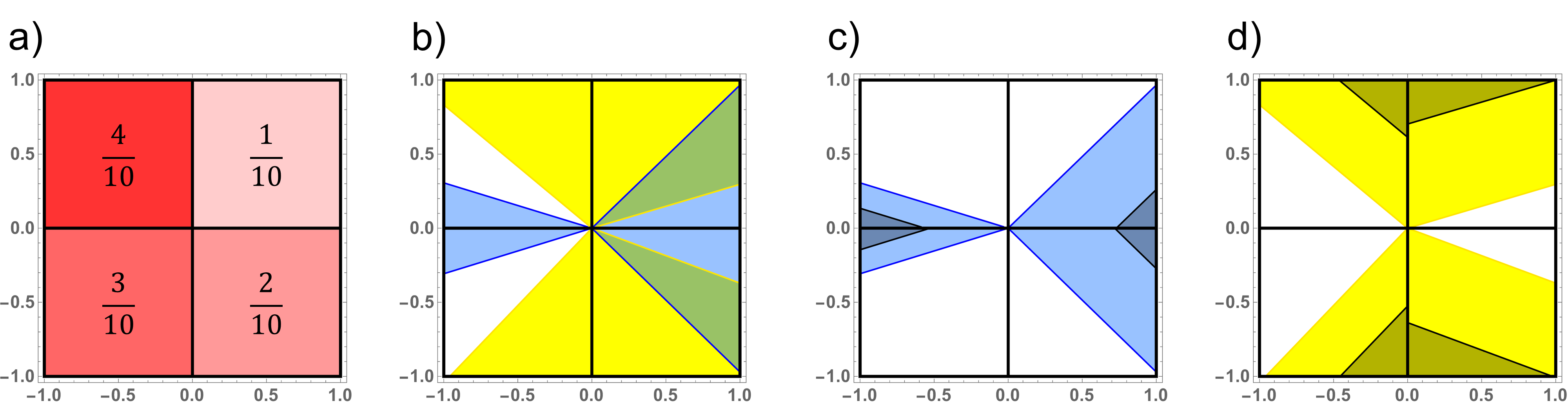}
    \caption{The Nash equilibrium and welfare-improving trades when all voters are paired up and allowed to trade for the distribution in a. This figure can be compared to Figure~\ref{fig:pw_const_dist} to see the small changes that happen when multiple trades are permitted.}
    \label{fig:pw_const_dist_gt}
\end{figure}

\begin{figure}
    \centering
    \includegraphics[width = \textwidth]{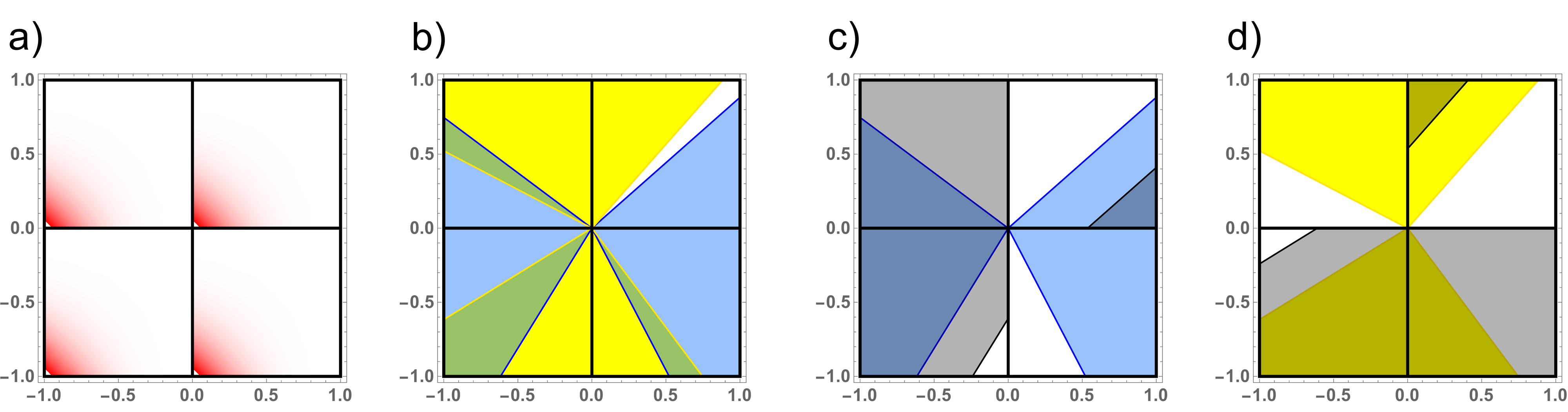}
    \caption{The Nash equilibrium and welfare-improving trades when all voters are paired up and allowed to trade for the distribution in a. This figure can be compared to Figure~\ref{fig:beneficial_dist} to see that when multiple trades are permitted, some trades located in the first quadrant become detrimental to group welfare.}
    \label{fig:beneficial_dist_gt}
\end{figure}

\begin{figure}
    \centering
    \includegraphics[width = \textwidth]{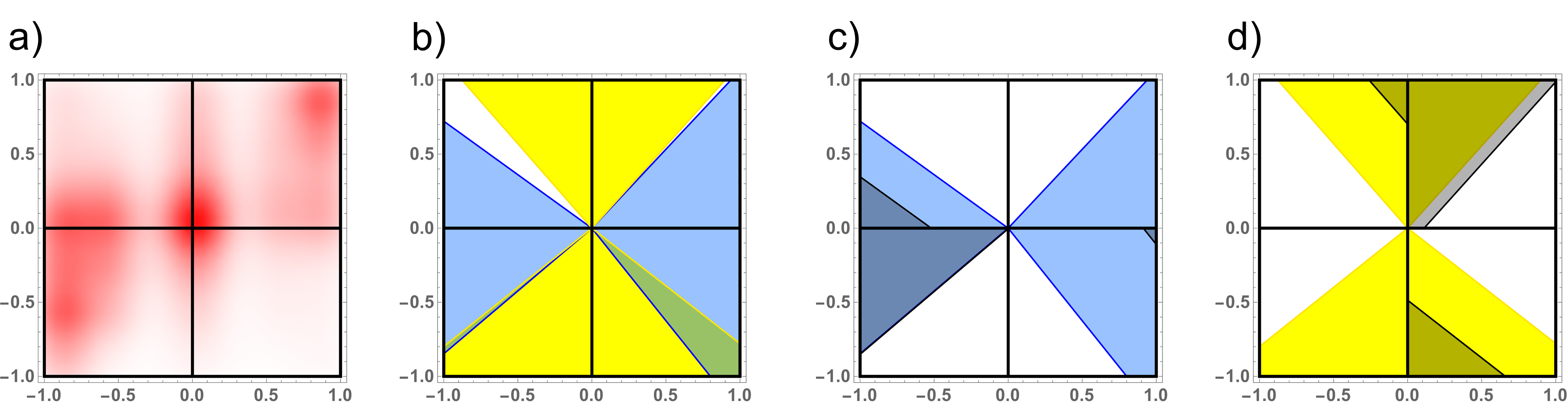}
    \caption{The Nash equilibrium and welfare-improving trades when all voters are paired up and allowed to trade on the issues from the ANES data. This figure can be compared to Figure~\ref{fig:pw_const_dist} to see the small changes that happen when multiple trades are permitted.}
    \label{fig:anes_fig_gt}
\end{figure}

\end{appendices}

\end{document}